\newcommand{\set}{\mathbbm}
\newcommand{\OO}{\mathrm{O}}
\newcommand{\e}{\mathrm{e}}
\def\deg{\operatorname{deg}}
\def\stirlingii#1#2{S_2(#1,#2)}
\def\edit#1{#1}
\newtheorem{example}{Example}
\newtheorem{definition}[example]{Definition}
\newtheorem{theorem}[example]{Theorem}
\newtheorem{lemma}[example]{Lemma}
\newtheorem{proposition}[example]{Proposition}
\newtheorem{remarks}[example]{Remarks}
\title{On the length of integers in telescopers for proper hypergeometric terms}
\author{Manuel Kauers\thanks{Supported by FWF project Y464-N18.}\\
   Research Institute for Symbolic Computation\\ 
  Johannes Kepler University\\
  Linz, Austria
  \and
   Lily Yen\\
   Capilano University and\\ 
  Simon Fraser University\\ 
  Vancouver, Canada
}
\begin{document}
\maketitle
\begin{abstract}
  We show that the number of digits in the integers of a creative telescoping
  relation of expected minimal order for a bivariate proper hypergeometric term has
  essentially cubic growth with the problem size. For telescopers of higher
  order but lower degree we obtain a quintic bound. Experiments suggest that
  these bounds are tight. As applications of our results, we give an improved
  bound on the maximal possible integer root of the leading coefficient of a
  telescoper, and the first discussion of the bit complexity of creative
  telescoping.
\end{abstract}

\section{Introduction}

Creative telescoping is a backbone of symbolic summation. It permits the construction of  
recurrence equations for definite sums. In its classical version,
it is applied 
 to sums whose summands are hypergeometric terms.  This situation was
intensively studied during the 1990s (see \citet{petkovsek97} and the references given there
for an overview on the classical results). 
 While during the first decade of this
century most research in the area 
 focussed on generalizing creative
telescoping to sums whose summands are more complicated
(see, for instance, the survey articles of \citet{koutschan13} and \citet{schneider13} and the references
given there), 
 the hypergeometric case is recently getting back into the focus. 
There is now a general interest in
getting a better understanding of the sizes of the output of summation
algorithms, and of the amount of time spent on the computation. 
\edit{First complexity estimates for summation (and integration) algorithms were given
by \cite{takayama95} and~\cite{gerhard04}. More recent works include the articles 
by \cite{bostan10} and~\cite{chen12c,chen12}. In the present paper, we continue these
investigations.} We work out bounds for the length of the
integers that may appear in the output of creative telescoping algorithms,
complementing earlier results 
 given for the order and the degree of creative
telescoping relations. As corollaries of our bounds, we obtain a new bound on
the maximal integer root of the leading coefficient as well as a first bound on
the bit complexity of creative telescoping.

Throughout this article, we consider a proper hypergeometric term
\begin{equation}\label{eq:hgdef}
   h
   =
   p\, x^n y^k \!
   \prod_{m=1}^M
      \frac{\Gamma(a_m n+a'_m k+a''_m)\Gamma(b_mn-b'_m k+b''_m)}
           {\Gamma(u_mn+u'_mk+u''_m)\Gamma(v_mn-v'_mk+v''_m)}
   ,
\end{equation}
where $p\in\set Z[n,k]$, $M\in\set N$ is fixed, $x$, $y$, $a_m$, $a'_m$,
$a''_m$, $b_m$, $b'_m$, $b''_m$, $u_m$, $u'_m$, $u''_m$, $v_m$, $v'_m$, $v''_m$
are fixed nonnegative integers, and $n$ and~$k$ are variables. To avoid discussion of
degenerate cases, we assume throughout that $h$ is not a rational function.
The assumption that there are exactly $M$ Gamma-terms for each of the four
types is without loss of generality, because we can always add further
terms~$\Gamma(0n+0k+\edit{1})$ without changing~$h$.

A creative telescoping relation for $h$ is a pair $(L,C)$, where
$L=\ell_0+\ell_1 S_n+\cdots+\ell_r S_n^r\in\set Z[n][S_n]\setminus\{0\}$ is a
nonzero recurrence operator in~$n$, free of~$k$, and $C\in\set Q(n,k)$ a
bivariate rational function in $n$ and~$k$ (which may well be zero), with the
property
\[
  L(h) = (S_k-1)(Ch).
\]
The symbols $S_n$ and $S_k$ refer to the usual shift operators $n\leadsto n+1$,
$k\leadsto k+1$, respectively. The operator $L$ is called a \emph{telescoper}
for~$h$, and $C$ is called a \emph{certificate} for $L$ and~$h$.
\edit{
Note that with $h$ non-rational, and $C$ nonzero, $Ch$ is also non-rational, in particular, not constant. Therefore, $(S_k-1)(Ch)$ is nonzero. From the equality above, we thus have $L(h)$ nonzero, or specifically, $L$ nonzero. In short, when $h$ is non-rational,   we can be sure that every
nontrivial pair $(L,C)$ must have a nontrivial~$L$.}

If $h$ has finite support, i.e., \edit{for every $n\in\set N$ there are only finitely 
many $k$ with $h(n,k)\neq0$, and if $Ch$ is well-defined for all~$n,k$,}
then a telescoper $L$ annihilates the definite hypergeometric sum
$H(n):=\sum_k h(n,k)$. If not, a creative telescoping relation still gives rise
to an inhomogeneous recurrence for finite definite sums such as $\sum_{k=0}^n
h(n,k)$ or $\sum_{k=n}^{2n} h(n,k)$.  See~\citet{petkovsek97} for details. 

The classical Zeilberger
algorithm~\citep{zeilberger90a,zeilberger91,petkovsek97} finds a creative
telescoping relation for any given proper hypergeometric term~$h$. This
algorithm is based on Gosper's algorithm~\citep{gosper78} for indefinite
hypergeometric summation and delivers a creative telescoping relation $(L,C)$
for which the order~$r$ of $L$ is minimal. An alternative algorithm proposed by 
Apagodu and Zeilberger~(\citeyear{mohammed05}) does not use Gosper's algorithm
during the computation but only in its correctness proof. \edit{This algorithm
also finds creative telescoping relations for proper hypergeometric terms, but
unlike Zeilberger's original algorithm there is no guarantee that the telescoper
has minimal possible order.} The key observation
behind \edit{the algorithm of Apagodu and Zeilberger} is that $L=\ell_0+\ell_1S_n+\cdots+\ell_rS_n^r\in\set
Q[n][S_n]$ is a telescoper for~$h$ if there exists some polynomial $Y\in\set
Q[n,k]$ with the property
\begin{equation}\label{eq:2}
  \ell_0 P_0 + \cdots + \ell_r P_r = Q\,S_k(Y) - R\,Y,
\end{equation}
where 
\begin{alignat*}1
  P_i &= x^i S_n^i(p) \prod_{m=1}^M \Bigl(
         (a_m n +a_m'k + a_m'')^{\overline{ia_m}}
         (b_m n - b_m'k + b_m'')^{\overline{ib_m}}
\\
      &\qquad{}\times
         (u_m n + u_m' k + u_m'' + iu_m  )^{\overline{(r-i)u_m}}
         (v_m n - v_m' k + v_m'' + iv_m )^{\overline{(r-i)v_m}}\Bigr)
\\
      &\kern290pt(i=0,\dots,r),
\\
    Q &= y \prod_{m=1}^M 
      (a_m n +a_m'k + a_m'')^{\overline{a_m'}}
      (v_m n - v_m' k + v_m'' + r v_m - v_m' )^{\overline{v_m'}},
\\
    R &= \prod_{m=1}^M 
      (u_m n + u_m' k + u_m'' + ru_m - u_m' )^{\overline{u_m'}}
      (b_m n - b_m'k + b_m'')^{\overline{b_m'}}.
\end{alignat*}
\edit{Here and below we write $x^{\overline{m}}:=x(x+1)\cdots(x+m-1)$ and
$x^{\underline{m}}:=x(x-1)\cdots(x-m+1)$ to denote the rising and falling factorial, respectively.}
\edit{A certificate is then given by 
\[
  C = \frac Yp\prod_{m=1}^M \frac{(b_m n - b_m'k + b_m'')^{\overline{b_m'}} }
       { (u_mn+u_m'k+u_m'')^{\overline{ru_m-u_m'}} (v_mn-v_m'k+v_m'')^{\overline{rv_m}}},
\]
so that 
\begin{alignat}1\label{eq:Ch}
  Ch = Y x^n y^k \prod_{m=1}^M \frac{\Gamma(a_mn+a_m'k+a_m'')\Gamma(b_mn-b_m'k+b_m''+b_m')}
     {\Gamma(u_mn+u_m'k+u_m''+ru_m)\Gamma(v_mn-v_m'k+v_m''+rv_m)}.
\end{alignat}
}

\edit{These results are due to Apagodu and Zeilberger~(\citeyear{mohammed05}). 
For a justification of the formulas, see either their article, or, with the 
notation we are using here, the paper by~\cite{chen12c}.}  The following 
definition contains certain quantities in terms of which bounds on the size 
of the telescoper of~$h$ can be formulated.

\begin{definition}\label{def:greek}
  For a proper hypergeometric term $h$ as above, define
  \begin{align*}
   \nu &=
      \max\Bigl\{
         \sum_{m=1}^M (a'_m+v'_m),
         \sum_{m=1}^M (u'_m+b'_m)
      \Bigr\}, \quad
      &
   \delta &= \deg(p),\\
   \vartheta&= \max\Bigl\{
         \sum_{m=1}^M (a_m+b_m),
         \sum_{m=1}^M (u_m+v_m)
      \Bigr\},\quad
      &
   \lambda &=\sum_{m=1}^M(u_m+v_m),
\\ \mu &= \sum_{m=1}^M (a_m+b_m-u_m-v_m). 
\end{align*}
Furthermore, we let 
\begin{alignat*}1
  \Omega:=\max_{m=1}^M\max\{&|a_m|, |a'_m|, |a''_m|, |b_m|, |b'_m|, |b''_m|,
                             |u_m|, |u'_m|, |u''_m|, |v_m|, |v'_m|, |v''_m|\}
\end{alignat*} 
be a bound on the integers appearing in the arguments of the $\Gamma$ terms of~$h$.
\end{definition}

Apagodu and Zeilberger show that $h$ admits a telescoper of order~$r$ for every
$r\geq\nu$, or in other words, that if $r$ is the order of the minimal
telescoper, then $r\leq\nu$. Generically this bound is tight. 
\citet{chen12c} supplement this result with
information about the degrees of the coefficients of the telescoper. They show
that for every $r\geq\nu$ and every $d$ satisfying
\[
  d>\frac{(\vartheta\nu-1)r + \tfrac12\nu(2\delta+|\mu|+3-(1+|\mu|)\nu)-1}{r-\nu+1},
\]
there exists a telescoper $L=\ell_0+\cdots+\ell_rS_n^r$ with $\max_{i=0}^r
|\ell_i|\leq d$. The purpose of the present article is to refine the analysis
one step further by giving bounds on the length of the integers appearing in the
coefficients $\ell_i$ of a telescoper~$L$ of~$h$. In Theorem~\ref{thm:minimal} in
Section~\ref{sec:minimal}, we show that hypergeometric terms $h$ have a telescoper 
of order $r=\nu$ whose integer coefficients have no more than $\OO(\Omega^3\log(\Omega))$ digits. 
In Theorem~\ref{thm:nonminimal} in Section~\ref{sec:minimal}, we show furthermore
that there are telescopers of order $r=\OO(\Omega)$ and degree $d=\OO(\Omega^2)$
whose integer coefficients have no more than $\OO(\Omega^5\log(\Omega))$ digits.
For both estimates, we provide experimental data that \edit{indicate} 
 that our bounds 
are sharp. 

\section{Bounding Tools}

In order to bound the integers arising in the coefficients of a telescoper, we
need to know by how much the size of the integers can grow during the various steps
of the computation. In particular, we need to know how adding, multiplying,
and shifting of polynomials may affect the length of their coefficients, and how long
the integer coefficients can become in the solution of a system of linear 
equations with polynomial coefficients. In this section we provide a
collection of results in this direction.

The coefficient length of a polynomial depends on the basis with respect to which
the polynomial is expressed. We are mostly interested in the coefficient length
with respect to the standard monomial basis $1,x,x^2,x^3,\dots$, but we will also
have occasion to use alternative bases. In the following definition we introduce
the notational distinction which will be used below. 

\begin{definition}
  \begin{enumerate}
  \item For $p=\sum_{i=0}^d p_i n^i\in\set Q[n]$, we call
    $|p|:=|p|_s:=\max_{i=0}^d |p_i|$ the \emph{(standard) height} or the
    \emph{(standard) norm} of~$p$.
  \item For $p=\sum_{i=0}^d p_i \binom{n}{i}\in\set Q[n]$, we call $|p|_b:=\max_{i=0}^d
    |p_i|$ the \emph{binomial height} or the \emph{binomial norm} of~$p$.
  \item For $p=\sum_{i=0}^d\sum_{j=0}^ep_{i,j}n^ik^j\in\set Q[n,k]$,
    we define $\|p\|_{s,s}:=\max_{i=0}^d\max_{j=0}^e|p_{i,j}|$.
  \item For $p=\sum_{i=0}^d\sum_{j=0}^ep_{i,j}n^i\binom kj\in\set Q[n,k]$,
    we define $\|p\|_{s,b}:=\max_{i=0}^d\max_{j=0}^e|p_{i,j}|$.
  \end{enumerate}
\end{definition}

Note that $|\cdot|_s$, $|\cdot|_b$, $\|\cdot\|_{s,s}$, and $\|\cdot\|_{s,b}$ are
indeed norms, i.e., they satisfy absolute scalability, triangle inequality, and
they are zero only when the argument is zero. The following lemmas give bounds
for shifted polynomials, for products of polynomials, and, to begin with, a
connection between the standard norm and the binomial norm.

\begin{lemma}[Conversion]\label{lemma:convert}
For all $p \in \set Q[n,k]$, we have $\|p\|_{s,b} \le \deg_k(p)!^2\, \lVert p\rVert_{s,s}$.
\end{lemma}
\begin{proof}
Recall from Equation (6.10) on page 262 of~\citet{GKP94}: 
\[
k^m = \sum_{i \ge 0} \stirlingii{m}{i}k^{\underline{i}} 
	=\sum_{i \ge 0} \stirlingii{m}{i}i! \frac{k^{\underline{i}}}{i!} 
	=\sum_{i \ge 0} \stirlingii{m}{i}i! \binom{k}{i},
\]
where $\stirlingii{m}{i}$ is the Stirling number of the second kind.
Write $p=p_0+p_1k+\cdots+p_dk^d$ with $p_0,\dots,p_d\in\set Q[n]$. Then 
\[
p = 
\sum_{j=0}^d p_j k^j = \sum_{j=0}^d 
\biggl( p_j \sum_{i=0}^j \stirlingii{j}{i} i! \binom{k}{i} \biggr)
=\sum_{i=0}^d \biggl(\sum_{j=i}^d p_j \stirlingii{j}{i} i!\biggr) \binom{k}{i}.
\]
Thus, for the binomial height of $p$, we find
\begin{alignat*}1
\|p\|_{s,b} 
&=\max_{i=0}^d \left| \sum_{j=i}^d p_j \stirlingii{j}{i} i! \right|
\leq \max_{i=0}^d \sum_{j=i}^d |p_j| \stirlingii{j}{i} i!
\leq \max_{i=0}^d \sum_{j=i}^d \|p\|_{s,s} \stirlingii ji d!\\
&
\leq \|p\|_{s,s}\, d! \sum_{j=0}^d \stirlingii{d}{i}
\le \|p\|_{s,s}\, d!\,\mathbf{B}_d
\le \|p\|_{s,s}\, d!^2,
\end{alignat*}
where $\mathbf{B}_d$ denotes the $d$th Bell number. 
\end{proof}

\begin{lemma}[Shift]\label{lemma:shift}
  For $q\in\set Q[n,k]$ and $r\in\set N$, we have
  $\|S_n^r(q)\|_{s,s}\leq(1+r)^{\deg_n(q)}\lVert q\rVert_{s,s}$ and
  $\|S_n^r(q)\|_{s,b}\leq(1+r)^{\deg_n(q)}\lVert q\rVert_{s,b}$.
\end{lemma}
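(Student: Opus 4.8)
The plan is to reduce both inequalities to a single combinatorial estimate on the coefficients produced by the substitution $n\mapsto n+r$. Writing $q=\sum_{i=0}^{d}\sum_{j}q_{i,j}n^ik^j$ with $d=\deg_n(q)$, the binomial theorem gives $(n+r)^i=\sum_{l=0}^{i}\binom il r^{i-l}n^l$, so that the coefficient of $n^lk^j$ in $S_n^r(q)$ equals $\sum_{i=l}^{d}q_{i,j}\binom il r^{i-l}$. By the triangle inequality this is bounded in absolute value by $\|q\|_{s,s}\sum_{i=l}^{d}\binom il r^{i-l}$, so everything comes down to showing
\[
  \sum_{i=l}^{d}\binom il r^{i-l}\leq (1+r)^d \qquad(0\leq l\leq d).
\]

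First I would establish this core inequality. Reindexing with $s=i-l$ turns the left-hand side into $\sum_{s=0}^{d-l}\binom{l+s}{s}r^s$. Since $l+s\leq d$ throughout this range, monotonicity of the binomial coefficient in its upper argument gives $\binom{l+s}{s}\leq\binom ds$, and extending the summation range to $s=0,\dots,d$ only adds nonnegative terms; hence the sum is at most $\sum_{s=0}^{d}\binom ds r^s=(1+r)^d$, as desired. (The degenerate case $r=0$ is covered as well, since then only the $s=0$ term survives on both sides.) Combining this with the bound of the previous paragraph and taking the maximum over $l$ and $j$ proves the first assertion $\|S_n^r(q)\|_{s,s}\leq(1+r)^{\deg_n(q)}\|q\|_{s,s}$.

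For the second assertion I would exploit that $S_n^r$ acts only on the variable $n$ and leaves the $k$-structure untouched. Writing $q=\sum_{j}q_j(n)\binom kj$ with $q_j\in\set Q[n]$, we have $S_n^r(q)=\sum_{j}\bigl(S_n^r q_j\bigr)\binom kj$, so the binomial-in-$k$ coefficients of $S_n^r(q)$ are precisely the univariate shifts $S_n^r q_j$. Applying the core inequality to each $q_j$ separately bounds $|S_n^r q_j|_s$ by $(1+r)^{\deg_n(q_j)}|q_j|_s\leq(1+r)^{\deg_n(q)}\|q\|_{s,b}$, and taking the maximum over $j$ yields the claim.

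The only genuine obstacle is the combinatorial inequality, and the monotonicity argument dispatches it in one line; the two norm statements then differ only bookkeeping-wise, since the binomial basis sits in the variable $k$ that the shift in $n$ does not affect.
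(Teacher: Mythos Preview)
Your proof is correct. The second inequality is handled exactly as in the paper: decompose $q$ in the binomial basis in $k$, observe that $S_n^r$ commutes with this decomposition, and reduce to the univariate case. For the first inequality, however, the paper simply invokes Lemma~3.4 of \citet{yen96} and moves on, whereas you supply a direct argument via the binomial theorem and the one-line estimate $\sum_{i=l}^{d}\binom il r^{i-l}\leq(1+r)^d$. Your route is therefore more self-contained; the paper's route saves space by outsourcing the computation.
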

\begin{proof}
  For $\|\cdot\|_{s,s}$, this is Lemma 3.4 of~\citet{yen96}. 
  It then also holds for $|\cdot|_s$ and polynomials in $\set Q[n]\subseteq\set Q[n,k]$.
  If finally $q=\sum_{i=0}^d q_i\binom ki\in\set Q[n,k]$ for certain $q_i\in\set Q[n]$, 
  then $\|S_n^r(q)\|_{s,b}=\max_{i=0}^d|S_n^r(q_i)|_s
  \leq (1+r)^d \max_{i=0}^d |q_i|_s=(1+r)^d \lVert q\rVert_{s,b}$, so it also holds for the
  norm $\|\cdot\|_{s,b}$.
\end{proof}

\begin{lemma}[Product]\label{lemma:product}
 \begin{enumerate}
 \item\label{lemma:product:3} 
   For $p_1,\dots,p_m\in\set Q[n]$, we have
   \[
     \biggl|\prod_{i=1}^m p_i\biggl|\leq \bigl(\max_{i=1}^m\deg(p_i)+1\bigr)^{m-1}\prod_{i=1}^m |p_i|.
   \]
 \item\label{lemma:product:4} Let $p_1,p_2,\dots,p_m\in\set Q[n,k]$ be polynomials of total degree~1, and
   $M\in\set N$ be such that $\|p_i\|_{s,s}\leq M$ for $i=1\dots,m$. Then
   for every $q\in\set Q[n,k]$, we have 
   \[
     \|p_1p_2\cdots p_mq\|_{s,b}\leq (2M)^m(\deg_k(q)+2)^{\overline{m}}\lVert q\rVert_{s,b}.
   \]
 \end{enumerate}
\end{lemma}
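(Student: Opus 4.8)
The plan is to treat the two parts separately, in each case reducing to a single multiplication and then iterating.

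For part~\ref{lemma:product:3} I would first record the elementary two-factor estimate: if $p,q\in\set Q[n]$, then the coefficient of $n^\ell$ in $pq$ is the convolution $\sum_{i+j=\ell}p_iq_j$, which has at most $\min(\deg p,\deg q)+1$ nonzero summands, so $|pq|\le(\min(\deg p,\deg q)+1)\,|p|\,|q|$. Then I would prove the claim by induction on~$m$, writing $\prod_{i=1}^m p_i=\bigl(\prod_{i=1}^{m-1}p_i\bigr)\,p_m$ and applying the two-factor estimate with second factor~$p_m$. Since $\deg(p_m)\le\max_i\deg(p_i)=:D$, this step contributes a factor of at most $D+1$, while the induction hypothesis supplies $(D+1)^{m-2}$ for the first $m-1$ factors, giving $(D+1)^{m-1}$ overall. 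This part is routine.

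For part~\ref{lemma:product:4} the key is a one-step bound: multiplying an arbitrary $q\in\set Q[n,k]$ by a single total-degree-$1$ polynomial $p=\alpha n+\beta k+\gamma$ with $\|p\|_{s,s}\le M$ satisfies $\|pq\|_{s,b}\le 2M(\deg_k(q)+2)\,\|q\|_{s,b}$. To obtain this I would expand $pq=\alpha(nq)+\beta(kq)+\gamma q$ and bound the three pieces in the $\|\cdot\|_{s,b}$ norm. Multiplication by $n$ and by $\gamma$ is harmless, since $\|nq\|_{s,b}=\|q\|_{s,b}$ and $\|\gamma q\|_{s,b}=|\gamma|\,\|q\|_{s,b}$. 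The only delicate piece is $kq$, for which I would use the identity $k\binom kj=j\binom kj+(j+1)\binom k{j+1}$; writing $q=\sum_{i,j}q_{i,j}n^i\binom kj$, this shows that the $n^i\binom k\ell$-coefficient of $kq$ equals $\ell(q_{i,\ell}+q_{i,\ell-1})$, whence $\|kq\|_{s,b}\le 2(\deg_k(q)+1)\,\|q\|_{s,b}$. Combining the three estimates via the triangle inequality together with $|\alpha|,|\beta|,|\gamma|\le M$ yields the one-step bound after simplification.

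Finally I would iterate the one-step bound $m$ times. Setting $d=\deg_k(q)$ and multiplying by $p_m,p_{m-1},\dots,p_1$ in turn, I note that each multiplication raises the $k$-degree by at most~$1$, so at step~$s$ the current factor carries $k$-degree at most $d+s-1$, and the one-step bound contributes $2M(d+s+1)$. Taking the product over $s=1,\dots,m$ gives $(2M)^m\prod_{s=1}^m(d+s+1)\,\|q\|_{s,b}$, and since $\prod_{s=1}^m(d+s+1)=(d+2)(d+3)\cdots(d+m+1)=(d+2)^{\overline m}$, this is exactly the asserted bound. The main obstacle is the one-step estimate for $kq$: getting the binomial identity right and keeping the extra factor as tight as $2(\deg_k(q)+1)$ rather than something looser is precisely what makes the telescoping product close up to the rising factorial $(\deg_k(q)+2)^{\overline m}$ instead of a weaker expression.
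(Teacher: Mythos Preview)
Your proposal is correct and follows essentially the same route as the paper: both parts reduce to the single-factor case and then iterate, and in part~\ref{lemma:product:4} you use the same binomial recursion $k\binom kj=j\binom kj+(j+1)\binom k{j+1}$ to arrive at the identical one-step bound $2M(\deg_k(q)+2)$. The only cosmetic difference is that you split $pq$ into $\alpha(nq)+\beta(kq)+\gamma q$ and apply the triangle inequality, whereas the paper writes out the $\binom ki$-coefficient of $pq$ directly; both computations land on the same constant.
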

\begin{proof}
  \begin{enumerate}
  \item It suffices to prove the case $m=2$. The general case then follows
    immediately by induction. To show the case $m=2$, consider two polynomials
    $p=\sum_{i=0}^d p_i n^i$ and $q=\sum_{i=0}^e q_i n^i$. The coefficient of
    $n^j$ in $pq$ is $\sum_{i=0}^{d+e} p_i q_{j-i}$, where we understand
    coefficients as being zero if their index is out of range. For every $j$,
    the sum can have at most $\min\{\deg(p),\deg(q)\}+1$ nonzero terms, and as
    each term is bounded by $|p_i q_{j-i}|\leq|p|\,|q|$, the claim follows.
  \item It suffices to prove the case $m=1$. The general case then follows
    immediately by induction on~$m$.  Consider $p=a+bk+cn\in\set Q[n,k]$ and
    write $q=\sum_{i=0}^d q_i\binom ki$ with $q_0,\dots,q_d\in\set Q[n]$. 
    Observe that
\[
  (u k + v) \binom k i = (u i + v)\binom k i + u(i + 1)\binom k{i+1}.
\]
Therefore 
\begin{alignat*}1
  pq&=(a n + b k + c) \sum_{i=0}^d q_i\binom ki\\
    &=\sum_{i=0}^{d} \bigl(b q_i k + (a n + c)q_i\bigr)\binom ki\\
    &=\sum_{i=0}^{d} (a n + b i + c)q_i\binom k i + b (i+1)q_i\binom k{i+1}\\
    &=\sum_{i=0}^{d+1} \bigl((a n + b i + c)q_i + b iq_{i-1}\bigr)\binom ki.
\end{alignat*}
Because of
\[
  |(a n + b i + c)q_i + b iq_{i-1}|\leq 2(i+1)M\max\{|q_i|,|q_{i-1}|\},
\]
and  $|q_i|\leq\|q\|_{s,b}$ for all~$i$, it follows that
\[
 \|p\,q\|_{s,b}\leq \max_{i=0}^{d+1} 2(i+1)M\max\{|q_i|,|q_{i-1}|\}\leq 2\,(d+2) M\,\lVert q\rVert_{s,b}
\]
as claimed. 
  \end{enumerate}
\end{proof}

Finally, we need a bound on the length of the integers which may appear in the
basis vectors of the nullspace of a matrix with univariate polynomial
entries. The result below takes into account that the columns of the matrix may
be split into two groups for which different bounds on the degrees and \edit{heights}
are known. Although matrices and vectors all have coefficients in $\set
Z[x]$, all linear algebra notions (rank, kernel, linear independence, etc.) are
understood with respect to the ground field~$\set Q(x)$.

\begin{lemma}
\label{prop: matrix degree/height bound}
Let $A=(A_0,A_1)\in\set Z[x]^{n\times(m_0+m_1)}$ be a matrix of rank~$\rho$.
For $i=0,1$, let $d_i$ and $M_i$ be bounds on the degrees and standard heights of
the entries of $A_i\in\set Z[x]^{n\times m_i}$.  Assume that $A_0$ has full rank. Then $\ker A$
has a basis of vectors from $\set Z[x]^{m_0+m_1}$ that are bounded in degree by
$(m_0-1)d_0 + (\rho-m_0)d_1+\max\{d_0,d_1\}$ and in height by
\[
   \rho! (\max\{d_0,d_1\}+1)^{\rho-1}
      M_0^{m_0-1} M_1^{\rho-m_0} \max\{M_0,M_1\} .
\]
\end{lemma}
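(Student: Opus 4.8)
The plan is to construct an explicit basis of $\ker A$ by Cramer's rule and then bound the sizes of the determinants that arise. First I would pass to a square system. Since $A_0$ has full column rank $m_0$, its columns are $\set Q(x)$-linearly independent, and because $A$ has rank $\rho\ge m_0$ I can extend them to a column basis of $A$ by adjoining $\rho-m_0$ suitable columns of $A_1$. Call these $\rho$ columns the \emph{pivot columns}; all $m_0$ columns of $A_0$ are among them, so every non-pivot column comes from $A_1$. Next I would select $\rho$ rows for which the $\rho\times\rho$ pivot block $B$ (pivot rows, pivot columns) is invertible over $\set Q(x)$, which is possible because the pivot columns have rank $\rho$. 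These $\rho$ rows are then linearly independent in $A$ and hence span its row space, so the remaining rows are $\set Q(x)$-combinations of them and the submatrix $\tilde A$ formed by the pivot rows has the same kernel as $A$.

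For each of the $(m_0+m_1)-\rho$ non-pivot columns $c$, solving $B\lambda=\tilde A_{\cdot,c}$ by Cramer's rule and clearing the common denominator $\det(B)$ yields a vector $w^{(c)}\in\set Z[x]^{m_0+m_1}$ whose entry in position $c$ is $\det(B)$, whose entry in each pivot position $p_j$ is $\pm\det(B^{(j\to c)})$ (where $B^{(j\to c)}$ denotes $B$ with its $j$-th column replaced by the column $c$ restricted to the pivot rows), and which is zero in all other non-pivot positions. A short computation gives $\tilde A\,w^{(c)}=0$, hence $A\,w^{(c)}=0$. These vectors are linearly independent, since restricting them to the non-pivot coordinates yields $\det(B)$ times an identity matrix, and there are exactly $\dim\ker A$ of them; thus they form a basis of $\ker A$ with entries in $\set Z[x]$.

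It then remains to bound the degree and height of the $\rho\times\rho$ determinants $\det(B)$ and $\det(B^{(j\to c)})$. Each is an alternating sum of $\rho!$ products, each product containing exactly one entry from every column. Among the columns, $m_0$ carry the bounds $(d_0,M_0)$ and the rest carry $(d_1,M_1)$, except that in $B^{(j\to c)}$ the replaced column is of type $A_1$ and so carries $(d_1,M_1)$. The maximal entry degree occurring in any product is therefore at most $\max\{d_0,d_1\}$, so the degree bound follows by summing the per-column degree bounds, while the first part of Lemma~\ref{lemma:product} bounds each product's height by $(\max\{d_0,d_1\}+1)^{\rho-1}$ times the product of the per-column height bounds; summing over the $\rho!$ products then gives the claimed height estimate.

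The only real difficulty is the bookkeeping. One has to check all three cases, namely $\det(B)$ and $\det(B^{(j\to c)})$ with the replaced column of type $A_0$ or of type $A_1$, and verify in each case that the product of per-column degree bounds is at most $(m_0-1)d_0+(\rho-m_0)d_1+\max\{d_0,d_1\}$ and that the product of per-column height bounds is at most $M_0^{m_0-1}M_1^{\rho-m_0}\max\{M_0,M_1\}$. This is exactly where the asymmetric exponents $m_0-1$ and $\rho-m_0$ together with the single $\max$-factor come from, by extracting one column worth of the larger bound in each case, and it is the step I expect to require the most attention.
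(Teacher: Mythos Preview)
Your proposal is correct and follows essentially the same route as the paper: reduce to a $\rho\times\rho$ invertible block by selecting pivot columns (all of $A_0$ plus $\rho-m_0$ columns of $A_1$) and $\rho$ independent rows, build the kernel basis via Cramer's rule with the common denominator $\det B$, and bound the resulting $\rho\times\rho$ determinants term by term using Lemma~\ref{lemma:product}.(\ref{lemma:product:3}). Your case analysis of the column types is exactly the paper's, and your remark that the asymmetric exponents $m_0-1$, $\rho-m_0$ together with the single $\max$ factor arise from extracting one column's worth of the larger bound matches the paper's combination of the two subcases.
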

\begin{proof}
By selecting a maximal linearly independent set of rows from~$A$,
we may assume without loss of generality that $n=\rho$.
Furthermore, because $A_0$ has full rank, we have $\rho \geq m_0$,
and by exchanging columns within $A_1$ if necessary, we may
assume that $A_1=(W,V)$ where
$W \in \set{Z}[x]^{\rho\times (\rho-m_0)}$,
$V \in \set Z[x]^{\rho\times{(m_1-(\rho-m_0))}}$
and
$U := (A_0,W) \in \set Z[x]^{\rho \times \rho}$
has full rank.

A basis of $\ker A$ is given by the vectors
$(v_i,-e_i)\in\set Q(x)^{\rho+(m_0+m_1- \rho)}$
where $e_i\in\set Q(x)^{m_0+m_1- \rho}$
is the $i$th unit vector and
$v_i\in\set Q(x)^\rho$
is the unique solution of the inhomogeneous linear system
$U v_i = V e_i$.
The right hand side is of course just the $i$th column of~$V$.
According to Cramer's rule, the $j$th component of $v_i$ is
given by $\frac{\det U'}{\det U}$ where $U'$ is the matrix
obtained from $U$ by replacing its $j$th column by the $i$th
column of~$V$.
Multiplying all the basis vectors by $\det U$ gives a basis
of polynomial entries with integer coefficients.
By Lemma~\ref{lemma:product}.(\ref{lemma:product:3}),
and from the definition of the determinant,
\[
  \det ((a_{i,j}))_{i,j=1}^n 
  = \sum_{\pi\in S_n} \operatorname{sgn}(\pi)\prod_{i=1}^n a_{i,\pi(i)},
\]
the heights of the determinants $\det U'$ corresponding to columns
$j \le m_0$ are bounded by 
$\rho! (\max\{d_0,d_1\}+1)^{\rho-1} M_0^{m_0-1} M_1^{\rho-m_0 + 1}$;
and by
\(
   \rho! (\max\{d_0,d_1\}+1)^{\rho-1}
   M_0^{m_0} M_1^{\rho-m_0 }
\)
for $j > m_0$.
Combining both cases gives the claimed bound. 
The degree estimate  follows from the defining formula for the determinant
by the same reasoning.
\end{proof}

\section{Bounds for $P_0,\dots,P_r$, $Q$, and~$R$}

\edit{In Sections \ref{sec:minimal} and~\ref{sec:nonminimal}}
we will obtain our bounds on the height of the telescoper by making an ansatz for
$\ell_0,\dots,\ell_r$ and the coefficients of the polynomial $Y$ in equation~\eqref{eq:2},
comparing coefficients, and applying Lemma~\ref{prop: matrix degree/height bound} to
the linear system obtained from comparing coefficients in~\eqref{eq:2}. For doing so,
we need to determine the heights and degrees of the polynomials in this equation. 

For the degrees, we have $\deg(P_i)\leq\delta+r\vartheta$ and
$\deg(Q),\deg(R)\leq\nu$ by Lemmas 2 and~4 of~\citet{chen12c}, where $\deg$
refers to the total degree.

For the heights, we apply the lemmas of the previous section. Noting that
the products over the rising factorials consist of linear factors all of which
have heights bounded by $(r+2)\Omega-1$, it follows that
\begin{alignat*}3
  \|P_i\|_{s,b}
    &\leq (2(r+2)\Omega-2)^{r \lambda + i\mu}(\delta+2)^{\overline{r\lambda + i\mu}}\lVert x^i S_n^i(p)\rVert_{s,b}
    &\quad&\text{by Lemma~\ref{lemma:product}.(\ref{lemma:product:4})}\\
    &\leq (2(r+2)\Omega-2)^{\vartheta r}(\delta+2)^{\overline{\vartheta r}}\lVert x^i S_n^i(p)\rVert_{s,b}
    &\quad&\text{because $r\lambda+i\mu\leq\vartheta r$}\\
    &\leq |x|^i (\delta+\vartheta r+1)!(2(r+2)\Omega-2)^{\vartheta r}(1+i)^{\deg_n(p)}\lVert p\rVert_{s,b} 
    &\quad&\text{by Lemma \ref{lemma:shift}}\\
    &\leq \|p\|_{s,s} \delta!^2 (1+r)^\delta|x|^r(\delta+\vartheta r+1)!(2(r+2)\Omega-2)^{\vartheta r}
    &\quad&\text{by Lemma \ref{lemma:convert}}
\end{alignat*}
for every $i=0,\dots,r$. Note that the right hand side does not depend on~$i$ but only on~$r$
and quantities that are determined by the hypergeometric term~$h$. 

For $Y_j=\binom kj$, we have $S_k(Y_j)=\binom{k+1}j=\binom kj+\binom k{j+1}$;  therefore,
$\|S_k(Y_j)\|_{s,b}=\|Y_j\|_{s,b}=1$. Hence, since also the linear factors in the rising factorials in $Q$ and $R$
are all bounded in height by $(r+2)\Omega-1$, we obtain, again by using 
Lemma~\ref{lemma:product}.(\ref{lemma:product:4}),
\begin{alignat*}3
  \|Q\,S_k(Y_j)\|_{s,b}
   &\leq |y| (2(r+2)\Omega-2)^{\sum_{m=1}^M(a_m'+v_m')}(j+2)^{\overline{\sum_{m=1}^M(a_m'+v_m')}}\lVert S_k(Y_j)\rVert_{s,b}
   \\
   &\leq |y| (j+\nu+1)^\nu (2(r+2)\Omega-2)^\nu,
\end{alignat*}
and likewise
\[
  \|R\,Y_j\|_{s,b}\leq (j+\nu+1)^\nu (2(r+2)\Omega-2)^\nu
\]
for every $j\in\set N$. 

\section{The minimal telescoper}\label{sec:minimal}

Choose $r=\nu$ and $s=\delta+(\vartheta-1)\nu$, and make an ansatz 
\[
  Y=y_0+y_1\binom k1 + \cdots + y_s\binom ks
\]
with undetermined coefficients $y_0,\dots,y_s$.
Then, comparing like coefficients of $\binom kj$ in the equation
\[
  \ell_0 P_0 + \cdots + \ell_r P_r = Q\,S_k(Y) - R\,Y
\]
leads to a system of homogeneous linear equations with
$(r+1)+(s+1)=\delta+\vartheta\nu+2$ variables
$\ell_0,\dots,\ell_r,y_0,\dots,y_s$ and no more than
\[
 \max\bigl\{1+\max_{i=0}^r \deg_k(P_i),\ 1+\deg_k(Q)+s,\ 1+\deg_k(R)+s\bigr\}\leq\delta+\vartheta\nu+1
\] 
equations. This system obviously has a nontrivial solution.

The matrix $A\in\set
Z[n]^{(\delta+\vartheta\nu+1)\times(\delta+\vartheta\nu+2)}$ encoding this
system has the form $A=(A_L,A_C)$ where $A_L\in\set
Z[n]^{(\delta+\vartheta\nu+1)\times(\nu+1)}$ consists of the columns corresponding
to the variables $\ell_j$ in the telescoper part, and $A_C\in\set
Z[n]^{(\delta+\vartheta\nu+1)\times(\delta+(\vartheta-1)\nu+1)}$ consists of the
columns corresponding to the variables $y_j$ in the certificate part.  More
precisely, the entry of $A_L$ in row~$i$ and column~$j$ is the coefficient of
$\binom k{i-1}$ in $P_{j-1}$ ($i=1,\dots,\delta+\vartheta\nu+1$;
$j=1,\dots,\nu+1$), and the entry of $A_C$ in row~$i$ and column~$j$ is the
coefficient of $\binom k{i-1}$ in $Q\,S_k(\binom k{j-1})-R\,\binom k{j-1}$ 
($i=1,\dots,\delta+\vartheta\nu+1$; $j=1,\dots,\delta+(\vartheta-1)\nu+1$).

By the results of the previous section, the entries of $A_L$ have degree at most
$\delta+\vartheta\nu$ and height at most $\|p\|_{s,s} \delta!^2
(\nu+1)^\delta|x|^\nu(\delta+\vartheta\nu+1)!(2(\nu+2)\Omega-2)^{\vartheta\nu}$,
and the entries of $A_C$ have degree at most $\delta+\vartheta\nu$ and height at
most $(|y|+1)(\delta+\vartheta\nu+1)^\nu (2(\nu+2)\Omega-2)^\nu$.

We want to determine the height of the polynomials in the solution vectors
of~$A$.  There are two cases to distinguish. If $A_L$ has full rank, then we can
apply Lemma~\ref{prop: matrix degree/height bound}  with $A_0=A_L$,
$A_1=A_C$, $\rho\leq n=\delta+\vartheta\nu+1$, $m_0=\nu+1$,
$m_1=\delta+(\vartheta-1)\nu+1$. It implies the existence of a solution
$(\ell_0,\dots,\ell_\nu,y_0,\dots,y_{\delta+(\vartheta-1)\nu})\in\set
Z[n]^{(\nu+1)+(\delta+(\vartheta-1)\nu+1)}$ with
\begin{align*}
 |\ell_i|
   &\leq (\delta+\vartheta\nu+1)!(\max\{\delta+\vartheta\nu,\delta+\vartheta\nu\}+1)^{\delta + \vartheta\nu}
   \vphantom{\Big\}}
   \\
   &\quad\times\Bigl(
   \|p\|_{s,s} \delta!^2(\nu+1)^\delta|x|^\nu(\delta+\vartheta\nu+1)!(2(\nu+2)\Omega-2)^{\vartheta\nu}
   \Bigr)^\nu
   \\
   &\quad\times\Bigl(
   (|y|+1)(\delta+\vartheta\nu+1)^\nu (2(\nu+2)\Omega-2)^\nu
   \Bigr)^{\delta+\vartheta\nu+1-\nu}
   \\
   &\quad\times\max\Bigl\{
     \|p\|_{s,s} \delta!^2(\nu+1)^\delta|x|^\nu(\delta+\vartheta\nu+1)!(2(\nu+2)\Omega-2)^{\vartheta\nu},\\
   &\qquad\qquad\qquad
     (|y|+1)(\delta+\vartheta\nu+1)^\nu (2(\nu+2)\Omega-2)^\nu
   \Bigr\}
   \\[5pt]
   &\leq ((\delta+\vartheta\nu+1)!)^2
      (\delta+\vartheta\nu+1)^{\delta + \vartheta\nu}\vphantom{\Big\}}
   \\
   &\quad\times
      \Bigl(
         \|p\|_{s,s} \delta!^2(\nu+1)^\delta
      |x|^\nu (\delta + \vartheta\nu + 1)! (2(\nu + 2)\Omega-2)^{\vartheta\nu}
      \Bigr)^\nu
   \\
   &\quad\times
      \Bigl(
         (|y|  + 1)\, (\delta +  \vartheta\nu + 1)^\nu
      (2(\nu+2)\Omega-2)^\nu
      \Bigr)^{\delta+(\vartheta-1)\nu+1}
   \\
   &\quad\times\lVert p\rVert_{s,s} \delta!^2(\nu+1)^\delta(\delta+\vartheta\nu+1)^\nu (2(\nu+2)\Omega-2)^{\vartheta\nu}
   \max\bigl\{|x|^\nu,|y|+1\bigr\}\vphantom{\Big\}}
   \\[5pt]
   &\le\vphantom{\Big\}}\max\bigl\{|x|^\nu,|y|+1\bigr\}
   \lVert p\rVert_{s,s}^{\nu+1}
    (\delta+\vartheta\nu+1)!^{\nu+2}
        (\nu+1)^{\delta(\nu+1)}(|y|+1)^{\delta+(\vartheta-1)\nu+1}
\\
  &\quad{}\times \delta!^{2(\nu+1)} |x|^{\nu^2}(\delta+\vartheta\nu+1)^{\delta+(\vartheta+\delta+2)\nu+(\vartheta-1)\nu^2}
(2(\nu+2)\Omega-2)^{(\delta+\vartheta+1)\nu+(2\vartheta-1)\nu^2}
\end{align*}
for $i=0,\dots,\nu$. 

If $A_L$ does not have full rank, then it has itself a nonempty kernel. In this
case, if $(\ell_0,\dots,\ell_\nu)$ is a nontrivial kernel element of~$A_L$, then
$(\ell_0,\dots,\ell_\nu,0,\dots,0)$ is a nontrivial kernel element
of~$A=(A_L,A_C)$. Therefore, in this case it suffices to estimate the height of
the polynomial entries in the kernel of~$A_L$. To this end, we use again
Lemma~\ref{prop: matrix degree/height bound}, this time taking $A_0$ to be some
nonzero column (w.l.o.g.\ the first), $A_1$ the remaining columns,
$n=\delta+\vartheta\nu+1$, $m_0=1$, $m_1=\nu$, $\rho\leq\nu-1$. Using for both $A_0$ and $A_1$ 
the degree and height estimates stated above for $A_L$, we get the
bound
\begin{alignat*}1
  |\ell_i|
   &\leq (\nu-1)!(\delta+\vartheta\nu+1)^{\nu-2}\Bigl(
   \|p\|_{s,s} \delta!^2(\nu+1)^\delta|x|^\nu(\delta+\vartheta\nu+1)!(2(\nu+2)\Omega-2)^{\vartheta\nu}
   \Bigr)^{\nu -1}
\end{alignat*}
for $i=0,\dots,r$. As this is always less than or equal to the bound obtained before
for the case when $A_L$ has full rank, we have completed the proof of the following theorem.
\edit{Recall from the remarks made in the introduction that the assumption of a
non-rational $h$ excludes the degenerate case that the telescoper may be zero.}

\begin{theorem}\label{thm:minimal}
  Let $h$ be a non-rational proper hypergeometric term as in \eqref{eq:hgdef}, and let 
  $\delta,\vartheta,\nu,\Omega$ be as in Definition~\ref{def:greek}. Then there exists
  a telescoper for $h$ of order~$r=\nu$ whose polynomial coefficients are bounded in height by
  \begin{alignat*}1
&\max\bigl\{|x|^\nu,|y|+1\bigr\}
   \lVert p\rVert_{s,s}^{\nu+1}
    (\delta+\vartheta\nu+1)!^{\nu+1}
        (\nu+1)^{\delta(\nu+1)}(|y|+1)^{\delta+(\vartheta-1)\nu+1}
\\
  &\quad{}\times \delta!^{2(\nu+1)} |x|^{\nu^2}(\delta+\vartheta\nu+1)^{\delta+(\vartheta+\delta+2)\nu+(\vartheta-1)\nu^2}
(2(\nu+2)\Omega-2)^{(\delta+\vartheta+1)\nu+(2\vartheta-1)\nu^2}.
  \end{alignat*}
\end{theorem}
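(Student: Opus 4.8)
The plan is to turn the defining telescoping relation~\eqref{eq:2} into a homogeneous linear system over $\set Q(n)$ by making a finite ansatz, and then invoke Lemma~\ref{prop: matrix degree/height bound} to read off a height bound for a nonzero solution. First I would fix $r=\nu$, which is legitimate because Apagodu and Zeilberger guarantee a telescoper of this order, and set the certificate-degree cutoff $s=\delta+(\vartheta-1)\nu$, writing $Y=\sum_{j=0}^s y_j\binom kj$ in the binomial basis. Expanding both sides of $\ell_0P_0+\cdots+\ell_rP_r=Q\,S_k(Y)-R\,Y$ in the basis $\binom kj$ and equating coefficients produces a homogeneous system in the $\delta+\vartheta\nu+2$ unknowns $\ell_0,\dots,\ell_\nu,y_0,\dots,y_s$; a quick degree count in $k$ shows there are at most $\delta+\vartheta\nu+1$ equations, so the coefficient matrix $A$ has one more column than row and the system is guaranteed to have a nontrivial solution. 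Crucially, the reason a nontrivial $(\ell,y)$ gives a genuine telescoper (rather than the trivial $L=0$) is the non-rationality of $h$ noted in the introduction.

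Next I would record the degree and height data for the two column-blocks of $A$. The left block $A_L$ collects the $\binom kj$-coefficients of the $P_i$, and the right block $A_C$ those of $Q\,S_k(\binom kj)-R\,\binom kj$. Using the section on bounding $P_0,\dots,P_r,Q,R$ specialized to $r=\nu$, the entries of $A_L$ have degree at most $\delta+\vartheta\nu$ and height at most $\|p\|_{s,s}\,\delta!^2(\nu+1)^\delta|x|^\nu(\delta+\vartheta\nu+1)!(2(\nu+2)\Omega-2)^{\vartheta\nu}$, while the entries of $A_C$ have degree at most $\delta+\vartheta\nu$ and height at most $(|y|+1)(\delta+\vartheta\nu+1)^\nu(2(\nu+2)\Omega-2)^\nu$. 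These are exactly the $M_0,M_1,d_0,d_1$ to feed into Lemma~\ref{prop: matrix degree/height bound}.

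The main obstacle is that Lemma~\ref{prop: matrix degree/height bound} requires the distinguished block $A_0$ to have full (column) rank, and this need not hold for $A_L$. I would therefore split into two cases. If $A_L$ has full rank, I apply the lemma directly with $A_0=A_L$, $A_1=A_C$, $m_0=\nu+1$, $m_1=\delta+(\vartheta-1)\nu+1$, and $\rho\le\delta+\vartheta\nu+1$, substitute the two height bounds into the lemma's product formula, and then coarsen the resulting expression by routine monotonicity estimates (bounding $\max\{d_0,d_1\}+1$, collapsing the $\max$ in the last factor, and absorbing factorial and power terms) into the single closed form stated in the theorem. If instead $A_L$ is rank-deficient, then $A_L$ alone has a nontrivial kernel, and padding such a kernel vector with zeros in the $y$-coordinates yields a nontrivial kernel element of $A=(A_L,A_C)$ with $C=0$; here I apply the lemma to $A_L$ with $A_0$ a single nonzero column ($m_0=1$, $m_1=\nu$, $\rho\le\nu-1$), which gives a strictly smaller bound. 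The final step is to verify that the rank-deficient bound is dominated by the full-rank bound, so that the full-rank estimate stands as the overall bound and completes the proof.
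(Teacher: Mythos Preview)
Your proposal is correct and follows essentially the same approach as the paper: the same ansatz with $r=\nu$ and $s=\delta+(\vartheta-1)\nu$ in the binomial basis, the same column-block decomposition $A=(A_L,A_C)$ with the same degree and height estimates, and the same case split on whether $A_L$ has full rank, in each case invoking Lemma~\ref{prop: matrix degree/height bound} and then checking that the rank-deficient bound is dominated by the full-rank bound.
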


\begin{remarks}
\begin{enumerate}
\item In general, a hypergeometric term~$h$ does not have any telescoper of
  order smaller than~$\nu$, so the theorem makes a statement about the integers
  appearing in the minimal order telescoper of a ``generic'' hypergeometric
  term~$h$. For hypergeometric terms which do possess a smaller telescoper, the
  theorem remains true as it stands, but does not say anything about the size of
  the integers in the minimal telescoper.
\item Lemma~\ref{prop: matrix degree/height bound} also yields the degree bound
  $(\delta+\nu\vartheta)(\delta+\nu\vartheta+1)=\OO(\Omega^4)$, which is worse
  than the degree bound $\OO(\Omega^3)$ given by \citet{chen12c}. In the generic case,
  when the minimal telescoper order is~$\nu$, the solution space of
  the linear system discussed above has dimension~1, so that at least in this 
  case there is a telescoper of degree $\OO(\Omega^3)$ and height as stated above.
  We do not know if this also applies to the degenerate case. 
\item Considering $\|p\|_{s,s}$, $\delta$, $|x|$, $|y|$, and $M$ as fixed, and
  noting that $\nu$ and $\vartheta$ are bounded by~$2M\Omega$, the bound of
  Theorem~\ref{thm:minimal} is equal to
  $\e^{64(M\Omega)^3\log(\Omega)+\OO(\Omega^3)}$ as $\Omega$ tends to infinity.
  Combined with the degree bound $\OO(\Omega^3)$ (when $\nu$ is minimal)
  or~$\OO(\Omega^4)$ (when it's not), it follows that there is a telescoper of
  order $r=\nu=\OO(\Omega)$ of bit size $\OO(\Omega^7\log(\Omega))$ or
  $\OO(\Omega^8\log(\Omega))$, respectively.
\item The choice of the binomial basis in the ansatz for $Y$ is motivated by the
  fact that with respect to this basis the shift does not increase the norm. In
  the standard basis we have $S_k(k^j)=(k+1)^j=\sum_i\binom ji k^i$, whose
  standard norm is $\binom{j}{\lfloor j/2\rfloor}\leq 2^j$. Using this (almost
  tight) bound in the argument above leads to a suboptimal bound of the form
  $\e^{\OO(\Omega^4\log(\Omega))}$.  Of course, the choice of the basis with
  respect to $k$ used in the ansatz for $Y$ does not have any effect on the
  output telescoper~$L$, which is free of~$k$ by construction.
\end{enumerate}
\end{remarks}

We conclude the section by a family of hypergeometric terms which gives evidence
that the bound of Theorem~\ref{thm:minimal} seems to be asymptotically accurate.

\begin{example}\label{ex:minimal}
  For $\Omega=1,2,3,\dots$ consider the proper hypergeometric term
  $h_\Omega=\frac{\Gamma(\Omega k)}{\Gamma(\Omega n-k)}$. We have 
  computed the minimal telescoper $L_\Omega$ of $h_\Omega$ for $\Omega=1,\dots,23$
  and determined the length of the integers appearing in them. Let $H_\Omega$
  be the logarithm of the maximum over the absolute values of all integers appearing in~$L_\Omega$.
  In Figure~\ref{fig:fit}, we plot the normalized values $\frac{H_\Omega}{\Omega^3}$ (bullets, \tikz[baseline=-1ex] \node[fill,circle, inner sep=1pt] {};)
  against the following least square fits, testing the four hypotheses 
  $H_\Omega=\Theta(\Omega^3\log(\Omega))$, $\Theta(\Omega^3)$, $\Theta(\Omega^2\log(\Omega))$,
  or $\Theta(\Omega^2)$, respectively: 
  \begin{enumerate}
  \item $\displaystyle\log(\Omega)\Bigl(1.43 + \frac{3.30}\Omega - \frac{1.66}{\Omega^2}\Bigr)$ (solid line, \tikz[baseline=-1ex] \draw[solid] (0,0)--(1,0);)
    \rule{0pt}{2em}
  \item $\displaystyle1\Bigl(5.06 - \frac{9.22}{\Omega} + \frac{4.23}{\Omega^2}\Bigr)$ (densely dashed, \tikz[baseline=-1ex] \draw[densely dashed] (0,0)--(1,0);)
  \item $\displaystyle\frac{\log(\Omega)}\Omega\Bigl(34.8 - \frac{167}{\Omega} + \frac{221}{\Omega^2}\Bigr)$ (loosely dashed, \tikz[baseline=-1ex] \draw[loosely dashed] (0,0)--(1,0);)
  \item $\displaystyle\frac1\Omega\Bigl(58.9 - \frac{182}{\Omega} + \frac{123}{\Omega^2}\Bigr)$ (dotted, \tikz[baseline=-1ex] \draw[dotted] (0,0)--(1,0);)
    \rule[-1.5em]{0pt}{1.5em}
  \end{enumerate}
  The best fit is given by the first hypothesis, suggesting that the bound
  proven above is asymptotically accurate. 
  

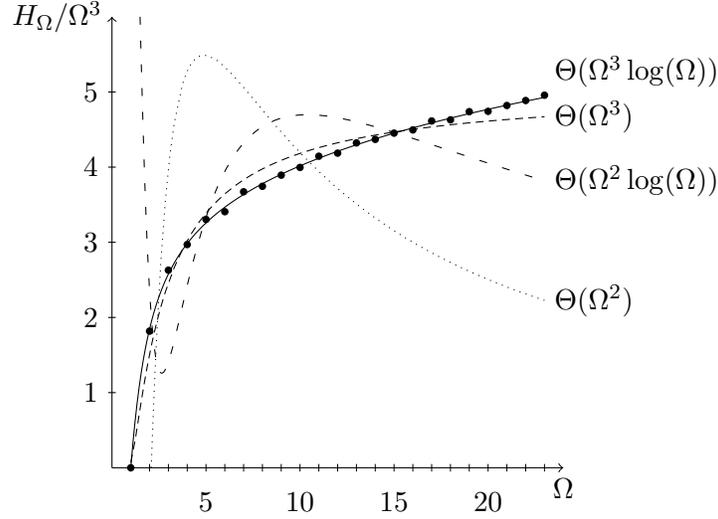
\begin{figure}[ht]
\centering\small
\begin{tikzpicture}[xscale=0.25]
   \draw[->] (0,0) -- (24,0) node[below]{$\Omega$};
   \draw[->] (0,0) -- (0,6) node[left]{$H_\Omega/\Omega^3$};
   \foreach \i in {1,2,...,23}
      \draw[very thin,yscale=0.25] (\i,0.2) -- +(0,-0.4);
   \foreach \i in {5,10,15,20}
      \node[below] at (\i,-0.2) {$\i$};
   \foreach \i in {1,...,5}
      \draw (0.2,\i) -- +(-0.4,0) node[left]{$\i$};
   \draw[solid] plot coordinates{ 
      (1.000000, 0.000000) (1.115667, 0.334043) (1.216309, 0.591211) 
      (1.329491, 0.846252) (1.443423, 1.071031) (1.556814, 1.267616) 
      (1.661941, 1.429352) (1.770795, 1.579132) (1.883371, 1.717899) 
      (1.995587, 1.842302) (2.111014, 1.957886) (2.212683, 2.050729) 
      (2.327137, 2.146549) (2.442062, 2.234694) (2.552813, 2.312970) 
      (2.653386, 2.379061) (2.772977, 2.452190) (2.874286, 2.510011) 
      (2.992132, 2.573038) (3.096446, 2.625425) (3.210895, 2.679609) 
      (3.319878, 2.728316) (3.433590, 2.776428) (3.538013, 2.818398) 
      (3.650647, 2.861509) (3.767642, 2.904125) (3.869487, 2.939581) 
      (3.979483, 2.976298) (4.093118, 3.012648) (4.204289, 3.046775) 
      (4.311852, 3.078547) (4.431283, 3.112495) (4.538596, 3.141892) 
      (4.653176, 3.172206) (4.757003, 3.198781) (4.870513, 3.226926) 
      (4.977317, 3.252597) (5.088958, 3.278641) (5.198108, 3.303372) 
      (5.312377, 3.328535) (5.422432, 3.352105) (5.534980, 3.375575) 
      (5.646595, 3.398251) (5.749158, 3.418591) (5.866706, 3.441347) 
      (5.971843, 3.461223) (6.083941, 3.481944) (6.191233, 3.501343) 
      (6.310367, 3.522411) (6.413528, 3.540272) (6.530617, 3.560133) 
      (6.637331, 3.577872) (6.754059, 3.596896) (6.854775, 3.613006) 
      (6.969480, 3.631024) (7.080279, 3.648110) (7.191005, 3.664883) 
      (7.301324, 3.681308) (7.407306, 3.696828) (7.521876, 3.713329) 
      (7.631069, 3.728800) (7.745989, 3.744823) (7.850020, 3.759105) 
      (7.964994, 3.774653) (8.075118, 3.789321) (8.184975, 3.803741)
      (8.299720, 3.818585) (8.405405, 3.832066) (8.513641, 3.845688) 
      (8.633146, 3.860520) (8.741348, 3.873766) (8.852005, 3.887138) 
      (8.964540, 3.900561) (9.067982, 3.912748) (9.178247, 3.925583) 
      (9.287674, 3.938167) (9.404726, 3.951462) (9.508019, 3.963057) 
      (9.627198, 3.976279) (9.734748, 3.988071) (9.841163, 3.999611) 
      (9.955593, 4.011883) (10.070487, 4.024065) (10.175023, 4.035031) 
      (10.286035, 4.046556) (10.394372, 4.057686) (10.511811, 4.069624) 
      (10.613428, 4.079849) (10.730139, 4.091477) (10.839854, 4.102297) 
      (10.948508, 4.112909) (11.057250, 4.123428) (11.168159, 4.134055) 
      (11.285721, 4.145210) (11.393513, 4.155342) (11.499919, 4.165253) 
      (11.613664, 4.175753) (11.727162, 4.186134) (11.829033, 4.195370) 
      (11.949706, 4.206215) (12.050680, 4.215210) (12.168781, 4.225642) 
      (12.281496, 4.235510) (12.382138, 4.244250) (12.495320, 4.254000) 
      (12.609252, 4.263732) (12.722643, 4.273337) (12.827770, 4.282172) 
      (12.936624, 4.291249) (13.049200, 4.300562) (13.161416, 4.309772) 
      (13.276843, 4.319170) (13.378512, 4.327387) (13.492967, 4.336568) 
      (13.607891, 4.345715) (13.718642, 4.354463) (13.819215, 4.362351) 
      (13.938806, 4.371663) (14.040115, 4.379494) (14.157961, 4.388540) 
      (14.262275, 4.396489) (14.376724, 4.405151) (14.485707, 4.413341) 
      (14.599419, 4.421826) (14.703842, 4.429566) (14.816476, 4.437859) 
      (14.933472, 4.446413) (15.035316, 4.453810) (15.145312, 4.461748) 
      (15.258948, 4.469895) (15.370118, 4.477812) (15.477681, 4.485423) 
      (15.597112, 4.493818) (15.704425, 4.501313) (15.819006, 4.509264) 
      (15.922832, 4.516425) (16.036342, 4.524205) (16.143146, 4.531481) 
      (16.254788, 4.539041) (16.363937, 4.546387) (16.478207, 4.554031) 
      (16.588262, 4.561348) (16.700809, 4.568786) (16.812424, 4.576119) 
      (16.914987, 4.582818) (17.032535, 4.590453) (17.137672, 4.597241) 
      (17.249770, 4.604438) (17.357062, 4.611288) (17.476196, 4.618849) 
      (17.579357, 4.625360) (17.696446, 4.632709) (17.803160, 4.639369) 
      (17.919888, 4.646613) (18.020604, 4.652830) (18.135309, 4.659874) 
      (18.246108, 4.666640) (18.356834, 4.673365) (18.467153, 4.680030) 
      (18.573135, 4.686399) (18.687705, 4.693249) (18.796898, 4.699742) 
      (18.911819, 4.706541) (19.015849, 4.712663) (19.130823, 4.719395) 
      (19.240947, 4.725810) (19.350804, 4.732177) (19.465549, 4.738793) 
      (19.571234, 4.744857) (19.679470, 4.751036) (19.798975, 4.757825) 
      (19.907177, 4.763940) (20.017834, 4.770163) (20.130369, 4.776462) 
      (20.233811, 4.782223) (20.344077, 4.788337) (20.453503, 4.794375) 
      (20.570555, 4.800802) (20.673848, 4.806447) (20.793027, 4.812929) 
      (20.900577, 4.818751) (21.006992, 4.824486) (21.121423, 4.830623) 
      (21.236316, 4.836756) (21.340852, 4.842311) (21.451864, 4.848184) 
      (21.560201, 4.853890) (21.677640, 4.860046) (21.779257, 4.865350) 
      (21.895968, 4.871414) (22.005683, 4.877088) (22.114338, 4.882684) 
      (22.223079, 4.888260) (22.333988, 4.893922) (22.451550, 4.899896) 
      (22.559342, 4.905351) (22.665748, 4.910712) (22.779493, 4.916419) 
      (22.892991, 4.922088) (23.000000, 4.927411) } 
      node[above right]{$\Theta(\Omega^3\log(\Omega))$};
   \draw[densely dashed] plot coordinates{ 
      (1.000000, .071331) (1.239769, .377225) 
      (1.479537, .763058) (1.688159, 1.085241) (1.896780, 1.377498) 
      (2.131397, 1.668039) (2.366014, 1.921522) (2.602185, 2.144284) 
      (2.838357, 2.339482) (3.308456, 2.662438) (3.744298, 2.902092) 
      (4.195587, 3.105530) (4.662310, 3.279803) (5.127537, 3.425507) 
      (5.606080, 3.552692) (6.027581, 3.649524) (6.502091, 3.744770) 
      (6.978549, 3.828378) (7.437705, 3.899538) (7.854663, 3.957433) 
      (8.350466, 4.019219) (8.770476, 4.066418) (9.259048, 4.116232) 
      (9.691517, 4.156356) (10.166003, 4.196647) (10.617827, 4.231826) 
      (11.089258, 4.265614) (11.522178, 4.294313) (11.989141, 4.323042) 
      (12.474184, 4.350696) (12.896416, 4.373141) (13.352438, 4.395846) 
      (13.823553, 4.417786) (14.284448, 4.437898) (14.730387, 4.456199) 
      (15.225526, 4.475303) (15.670430, 4.491472) (16.145461, 4.507781) 
      (16.575909, 4.521777) (17.046503, 4.536291) (17.489295, 4.549255) 
      (17.952140, 4.562141) (18.404656, 4.574129) (18.878398, 4.586079) 
      (19.334668, 4.597049) (19.801270, 4.607757) (20.264008, 4.617901) 
      (20.689216, 4.626832) (21.176553, 4.636636) (21.612433, 4.645039) 
      (22.077172, 4.653640) (22.521986, 4.661548) (23.000000, 4.669711) }
      node[right]{$\Theta(\Omega^3)$};
   \draw[loosely dashed] plot coordinates{ 
      (1.493187, 6.000000) (1.500118, 5.934260) (1.556814, 5.394931) 
      (1.609377, 4.911725) (1.661941, 4.455140) (1.716368, 4.016587) 
      (1.770795, 3.615969) (1.827083, 3.242546) (1.883371, 2.910116) 
      (1.995587, 2.362595) (2.111014, 1.942198) (2.212683, 1.673463) 
      (2.327137, 1.464812) (2.442062, 1.336506) (2.497438, 1.298664) 
      (2.552813, 1.274137) (2.577957, 1.266988) (2.603100, 1.262146) 
      (2.628243, 1.259486) (2.653386, 1.258889) (2.683284, 1.260708) 
      (2.713181, 1.265104) (2.743079, 1.271905) (2.772977, 1.280948) 
      (2.823631, 1.300963) (2.874286, 1.326290) (2.992132, 1.402375) 
      (3.096446, 1.485540) (3.210895, 1.589225) (3.319878, 1.696533) 
      (3.433590, 1.814406) (3.538013, 1.925873) (3.650647, 2.047765) 
      (3.767642, 2.174630) (3.869487, 2.284263) (3.979483, 2.401016) 
      (4.093118, 2.519156) (4.204289, 2.631781) (4.311852, 2.737618) 
      (4.431283, 2.851227) (4.538596, 2.949630) (4.653176, 3.050737) 
      (4.757003, 3.138776) (4.870513, 3.231123) (4.977317, 3.314312) 
      (5.088958, 3.397480) (5.198108, 3.475115) (5.312377, 3.552585) 
      (5.422432, 3.623615) (5.534980, 3.692724) (5.646595, 3.757850) 
      (5.749158, 3.814799) (5.866706, 3.876782) (5.971843, 3.929353) 
      (6.083941, 3.982539) (6.191233, 4.030781) (6.310367, 4.081422) 
      (6.413528, 4.122887) (6.530617, 4.167383) (6.637331, 4.205657) 
      (6.754059, 4.245143) (6.854775, 4.277301) (6.969480, 4.311861) 
      (7.080279, 4.343247) (7.191005, 4.372740) (7.301324, 4.400347) 
      (7.407306, 4.425269) (7.521876, 4.450527) (7.631069, 4.473045) 
      (7.745989, 4.495179) (7.850020, 4.513894) (7.964994, 4.533182) 
      (8.075118, 4.550344) (8.184975, 4.566243) (8.299720, 4.581608) 
      (8.405405, 4.594685) (8.513641, 4.607060) (8.633146, 4.619579) 
      (8.741348, 4.629926) (8.852005, 4.639582) (8.964540, 4.648483) 
      (9.067982, 4.655885) (9.178247, 4.662988) (9.287674, 4.669270) 
      (9.404726, 4.675182) (9.508019, 4.679737) (9.627198, 4.684256) 
      (9.734748, 4.687687) (9.841163, 4.690505) (9.955593, 4.692923) 
      (10.070487, 4.694741) (10.175023, 4.695890) (10.286035, 4.696605) 
      (10.394372, 4.696823) (10.511811, 4.696549) (10.613428, 4.695903) 
      (10.730139, 4.694713) (10.839854, 4.693177) (10.948508, 4.691277) 
      (11.057250, 4.689012) (11.168159, 4.686347) (11.285721, 4.683147) 
      (11.393513, 4.679889) (11.499919, 4.676383) (11.613664, 4.672332) 
      (11.727162, 4.667990) (11.829033, 4.663851) (11.949706, 4.658665) 
      (12.050680, 4.654100) (12.168781, 4.648512) (12.281496, 4.642941) 
      (12.382138, 4.637780) (12.495320, 4.631774) (12.609252, 4.625522) 
      (12.722643, 4.619106) (12.827770, 4.612992) (12.936624, 4.606502) 
      (13.049200, 4.599628) (13.161416, 4.592618) (13.276843, 4.585254) 
      (13.378512, 4.578643) (13.492967, 4.571069) (13.607891, 4.563330) 
      (13.718642, 4.555753) (13.819215, 4.548775) (13.938806, 4.540364) 
      (14.040115, 4.533146) (14.157961, 4.524651) (14.262275, 4.517046) 
      (14.376724, 4.508614) (14.485707, 4.500506) (14.599419, 4.491967) 
      (14.703842, 4.484060) (14.816476, 4.475463) (14.933472, 4.466463) 
      (15.035316, 4.458575) (15.145312, 4.450003) (15.258948, 4.441092) 
      (15.370118, 4.432325) (15.477681, 4.423798) (15.597112, 4.414284) 
      (15.704425, 4.405695) (15.819006, 4.396487) (15.922832, 4.388112) 
      (16.036342, 4.378925) (16.143146, 4.370253) (16.254788, 4.361162) 
      (16.363937, 4.352250) (16.478207, 4.342898) (16.588262, 4.333871) 
      (16.700809, 4.324622) (16.812424, 4.315435) (16.914987, 4.306980) 
      (17.032535, 4.297278) (17.137672, 4.288591) (17.249770, 4.279321) 
      (17.357062, 4.270443) (17.476196, 4.260579) (17.579357, 4.252034) 
      (17.696446, 4.242334) (17.803160, 4.233494) (17.919888, 4.223824) 
      (18.020604, 4.215483) (18.135309, 4.205986) (18.246108, 4.196818) 
      (18.356834, 4.187662) (18.467153, 4.178546) (18.573135, 4.169795) 
      (18.687705, 4.160345) (18.796898, 4.151348) (18.911819, 4.141891) 
      (19.015849, 4.133340) (19.130823, 4.123903) (19.240947, 4.114877) 
      (19.350804, 4.105887) (19.465549, 4.096513) (19.571234, 4.087893) 
      (19.679470, 4.079081) (19.798975, 4.069370) (19.907177, 4.060595) 
      (20.017834, 4.051638) (20.130369, 4.042549) (20.233811, 4.034212) 
      (20.344077, 4.025344) (20.453503, 4.016563) (20.570555, 4.007192) 
      (20.673848, 3.998942) (20.793027, 3.989447) (20.900577, 3.980901) 
      (21.006992, 3.972465) (21.121423, 3.963417) (21.236316, 3.954358) 
      (21.340852, 3.946137) (21.451864, 3.937430) (21.560201, 3.928956) 
      (21.677640, 3.919796) (21.779257, 3.911893) (21.895968, 3.902842) 
      (22.005683, 3.894358) (22.114338, 3.885982) (22.223079, 3.877623) 
      (22.333988, 3.869123) (22.451550, 3.860142) (22.559342, 3.851932) 
      (22.665748, 3.843853) (22.779493, 3.835244) (22.892991, 3.826681) 
      (23.000000, 3.818634) }
      node[right]{$\Theta(\Omega^2\log(\Omega))$};
   \draw[dotted] plot coordinates{ 
      (2.088313, 0.000000) (2.131397, .307060) (2.248705, 1.045601) 
      (2.366014, 1.693238) (2.602185, 2.755823) (2.838357, 3.553693) 
      (3.073407, 4.144946) (3.308456, 4.581388) (3.744298, 5.098616) 
      (3.969942, 5.260045) (4.195587, 5.369620) (4.428948, 5.440474) 
      (4.662310, 5.477722) (4.894923, 5.488855) (5.127537, 5.479803) 
      (5.606080, 5.415755) (6.027581, 5.325809) (6.502091, 5.202619) 
      (6.978549, 5.066127) (7.437705, 4.929098) (7.854663, 4.803478) 
      (8.350466, 4.655440) (8.770476, 4.532646) (9.259048, 4.393938) 
      (9.691517, 4.275415) (10.166003, 4.150305) (10.617827, 4.036081) 
      (11.089258, 3.921997) (11.522178, 3.821735) (11.989141, 3.718277) 
      (12.474184, 3.615769) (12.896416, 3.530467) (13.352438, 3.442258) 
      (13.823553, 3.355194) (14.284448, 3.273805) (14.730387, 3.198433) 
      (15.225526, 3.118419) (15.670430, 3.049638) (16.145461, 2.979272) 
      (16.575909, 2.918101) (17.046503, 2.853889) (17.489295, 2.795876) 
      (17.952140, 2.737593) (18.404656, 2.682816) (18.878398, 2.627681) 
      (19.334668, 2.576603) (19.801270, 2.526312) (20.264008, 2.478279) 
      (20.689216, 2.435675) (21.176553, 2.388563) (21.612433, 2.347902) 
      (22.077172, 2.306008) (22.521986, 2.267253) (23.000000, 2.226999) }
      node[right]{$\Theta(\Omega^2)$};
   \foreach \i/\j in {1/0.000, 2/1.818, 3/2.629, 4/2.970, 
         5/3.304, 6/3.407, 7/3.672, 8/3.745, 9/3.893, 
         10/3.996, 11/4.145, 12/4.186, 13/4.322, 14/4.369, 
         15/4.452, 16/4.495, 17/4.614, 18/4.629, 19/4.738, 
         20/4.742, 21/4.820, 22/4.886, 23/4.957}
      \node[fill,circle, inner sep=1pt] at (\i,\j) {};
\end{tikzpicture}
\caption{Heights of minimal telescopers}
\label{fig:fit}
\end{figure}
\edit{The corresponding comparison for the total bit size of the telescopers suggests 
that the bound $\Theta(\Omega^7\log(\Omega))$ is right. As the figure for this case 
looks very similar to the figure above, we do not reproduce it here.}
    
\end{example}

\section{Nonminimal telescopers}\label{sec:nonminimal}

As shown by \citet{chen12c}, telescopers of order $r>\nu$ may have much smaller
degrees than the (generically) minimal telescoper of order~$r=\nu$. More
precisely, the arithmetic size, i.e., the number of monomials $n^i S_n^j$ with a
nonzero coefficient appearing in a telescoper, which is bounded by $(r+1)(d+1)$,
is asymptotically smaller by one order of magnitude when $r=\alpha\nu$ for any
fixed constant~$\alpha>1$. It is therefore also interesting to bound the length
of the integers appearing in telescopers of nonminimal order.

In this section, we derive such a bound. Following \citet{chen12c}, we proceed by
analyzing the linear system of equations obtained from the parameterized Gosper
equation~\eqref{eq:2} by comparing coefficients with respect to both $n$
and~$k$. The corresponding matrix is much larger but its entries are integers
instead of integer polynomials.

As the resulting bound turns out to be much larger than the bound obtained in
the previous section for the height of the telescoper of order~$\nu$, we confine
ourselves to giving only an asymptotic estimate rather than an exact formula.
This makes the expressions in the calculations a little simpler.

Choose $r=2\nu=\OO(\Omega)$, $s=\delta+r\vartheta-\nu$,~$d=4\nu\vartheta=\OO(\Omega^2)$, and make an ansatz
\[
  L = \sum_{j=0}^r \sum_{i=0}^d \ell_{i,j} n^i S_n^j,\qquad
  Y = \sum_{j=0}^{s+d} \sum_{i=0}^s y_{i,j} n^i \binom kj,
\]
with undetermined coefficients $\ell_{i,j}$ and $y_{i,j}$. 
Then, comparing like coefficients of $n^i\binom kj$ in the equation
\[
  \sum_{j=0}^r \sum_{i=0}^d \ell_{i,j} n^i P_j
  = Q\,S_k(Y) - R\,Y
\]
leads to a system of homogeneous linear equations with 
\[
  (r+1)(d+1) + (s+d+1)(s+1) = 12\nu^2\vartheta^2 + (12 + 8\delta)\nu\vartheta + \nu^2 + \OO(\Omega)=\OO(\Omega^4)
\]
variables $\ell_{i,j}$ and $y_{i,j}$ and no more than
\begin{alignat*}1
    &\max\Big\{
    (\delta + r\vartheta + d + 1)(\delta + r\vartheta + 1), (\nu+s+d+1)(\nu + s + 1)\Bigr\}\\
    &= (\delta + r\vartheta + d + 1)(\delta + r\vartheta + 1) = 12\nu^2\vartheta^2
  +(8+8\delta)\nu\vartheta + \OO(1) = \OO(\Omega^4)
\end{alignat*}
equations. As $12>8$, this system has a nontrivial solution if $\nu\vartheta \to \infty$, as $\Omega \to \infty$. 

Let $A=(A_L,A_C)$ be the matrix encoding this linear system, with $A_L$ the
submatrix consisting of the columns corresponding to the variables $\ell_{i,j}$
and $A_C$ the part consisting of the columns corresponding to the variables
$y_{i,j}$, respectively. As the coefficients of $P_i$, $Q$, or $R$ do not change
when these polynomials are multiplied by some term~$n^j$ (only the exponents
change), we can use the same bounds for the heights of the matrix entries as
before. Hence $A_L$ is an integer matrix with $(r+1)(d+1)=\OO(\Omega^3)$ columns
and $\OO(\Omega^4)$ rows whose entries are bounded in absolute value by
$\e^{\OO(\Omega^2\log(\Omega))}$, and $A_C$ is an integer matrix with
$\OO(\Omega^4)$ rows and columns whose entries are bounded in absolute value by
$\e^{\OO(\Omega\log(\Omega))}$.

If $A_L$ happens to have full rank, we can apply Lemma~\ref{prop: matrix degree/height bound} 
to~$A$, interpreting its entries as integer polynomials
of degree zero.  It follows that the solution space has a basis whose components
are bounded by
\[
  \OO(\Omega^4)! (\e^{\OO(\Omega^2\log(\Omega))})^{\OO(\Omega^3)} (\e^{\OO(\Omega\log(\Omega))})^{\OO(\Omega^4)}
  = \e^{\OO(\Omega^5\log(\Omega))}.
\]
If $A_L$ does not have full rank, then, as before, any nontrivial solution of
$A_L$ gives rise to a nontrivial solution of $A$ by padding the solution vectors
with zeros. Applying Lemma~\ref{prop: matrix degree/height bound} to an arbitrary 
decomposition of $A_L$ into a block of full rank and the rest gives the bound
\[
  \OO(\Omega^3)! (\e^{\OO(\Omega^2\log(\Omega))})^{\OO(\Omega^3)} = \e^{\OO(\Omega^5\log(\Omega))}
\]
for the size of the integers in a basis of the solution space of~$A_L$. 
We have thus completed the proof of the following theorem. 

\begin{theorem}\label{thm:nonminimal}
  For every $\Omega\in\set N$, let $h_\Omega$ be a proper hypergeometric term as
  in \eqref{eq:hgdef}  for which the integer coefficients appearing in the $\Gamma$
  terms are bounded in absolute value by $\Omega$, for which $p$, $x$ and
  $y$ are fixed, and for which $\nu\vartheta \to\infty$ as $\Omega\to\infty$.
  Then, as $\Omega$ approaches infinity, each term $h_\Omega$
  admits a telescoper $L_\Omega$ of order $\OO(\Omega)$ and polynomial degree
  $\OO(\Omega^2)$ with integer coefficients bounded in absolute value
  by~$\e^{\OO(\Omega^5\log(\Omega))}$. 
\end{theorem}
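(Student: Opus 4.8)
The plan is to reduce the existence of a telescoper of the stated order and degree to the solvability of a homogeneous linear system, and then to bound the integers in a kernel vector of that system by means of Lemma~\ref{prop: matrix degree/height bound}. By the Apagodu--Zeilberger criterion recalled in~\eqref{eq:2}, any pair $(L,Y)$ satisfying $\sum_{i,j}\ell_{i,j}n^i P_j = Q\,S_k(Y) - R\,Y$ yields a telescoper $L$. I would therefore fix the shape of the ansatz in advance, taking $L$ of order $r=\OO(\Omega)$ and degree $d=\OO(\Omega^2)$ and $Y$ of matching degrees in $n$ and $k$, with all coefficients $\ell_{i,j}$ and $y_{i,j}$ left undetermined. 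Concretely, $r=2\nu$, $d=4\nu\vartheta$, and $s=\delta+r\vartheta-\nu$ are the natural choices, since $r\geq\nu$ is forced by the Apagodu--Zeilberger order bound, while $d$ must be large enough (and the certificate degrees matched accordingly) to leave more free coefficients than there are linear constraints.

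Comparing coefficients of the monomials $n^i\binom kj$ on both sides turns~\eqref{eq:2} into a homogeneous linear system over~$\set Q$ whose matrix $A=(A_L,A_C)$ has \emph{integer} rather than polynomial entries, because one now expands in both $n$ and~$k$. The block $A_L$ collects the columns indexed by the $\ell_{i,j}$ and $A_C$ those indexed by the $y_{i,j}$. The heights of these entries are precisely the coefficient heights of $P_j$, $Q\,S_k(Y_j)$, and $R\,Y_j$ computed in Section~3, which are unaffected by multiplication with a power $n^i$; this gives entries of $A_L$ bounded by $\e^{\OO(\Omega^2\log(\Omega))}$ and entries of $A_C$ bounded by $\e^{\OO(\Omega\log(\Omega))}$, with $A_L$ of size $\OO(\Omega^4)\times\OO(\Omega^3)$ and $A_C$ of size $\OO(\Omega^4)\times\OO(\Omega^4)$. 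The first thing to verify is that the system is genuinely underdetermined: a count shows that both the number of variables and the number of equations have leading term $12\nu^2\vartheta^2$, so the existence of a nontrivial solution hinges on the \emph{next} order, where the variable count carries a coefficient $12$ against the equation count's $8$ in front of $\nu\vartheta$. Since $12>8$, the surplus is $\Theta(\nu\vartheta)$, which is positive once $\nu\vartheta\to\infty$.

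Granting a nontrivial kernel vector, I would bound its integer entries with Lemma~\ref{prop: matrix degree/height bound}, applied to $A$ with all entries regarded as degree-zero polynomials. If $A_L$ has full rank one applies the lemma directly with $A_0=A_L$, $A_1=A_C$, yielding a bound of the form $\OO(\Omega^4)!\,(\e^{\OO(\Omega^2\log\Omega)})^{\OO(\Omega^3)}(\e^{\OO(\Omega\log\Omega)})^{\OO(\Omega^4)}$. If $A_L$ is rank-deficient, a nontrivial kernel vector of $A_L$ padded with zeros already solves $A$, and one applies the lemma to a full-rank decomposition of $A_L$ alone, giving $\OO(\Omega^3)!\,(\e^{\OO(\Omega^2\log\Omega)})^{\OO(\Omega^3)}$. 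In both cases the factorial contributes only $\e^{\OO(\Omega^4\log\Omega)}$ by Stirling's formula and is therefore dominated, while each of the two height powers contributes $\e^{\OO(\Omega^5\log\Omega)}$, so the total is $\e^{\OO(\Omega^5\log\Omega)}$, as claimed.

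I expect the main obstacle to be the bookkeeping that ties the two halves of the argument together: the parameters $r$ and $d$ must be chosen large enough that the variable--equation surplus is positive (guaranteeing existence), yet not so large that the exponents $m_0-1$ and $\rho-m_0$ attached to the entry heights in Lemma~\ref{prop: matrix degree/height bound} push the final estimate past $\e^{\OO(\Omega^5\log\Omega)}$. Checking that the choices $r=2\nu$, $d=4\nu\vartheta$ meet both demands simultaneously, and that the two case-analysis bounds collapse to the same order of magnitude, is where the computation must be carried out with care rather than settled by a single clean estimate.
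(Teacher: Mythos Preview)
Your proposal is correct and follows essentially the same approach as the paper: the same ansatz parameters $r=2\nu$, $d=4\nu\vartheta$, $s=\delta+r\vartheta-\nu$, the same coefficient comparison in $n^i\binom kj$, the same $12$-versus-$8$ count to guarantee a nontrivial solution, and the same two-case application of Lemma~\ref{prop: matrix degree/height bound} depending on whether $A_L$ has full rank. The only cosmetic difference is that you make the Stirling estimate on the factorial explicit, whereas the paper absorbs it silently into the $\e^{\OO(\Omega^5\log\Omega)}$ simplification.
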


There is nothing special about the choice $r=2\nu$ in the above derivation. The argument
works more generally for any choice $r=\alpha\nu$ where $\alpha>1$ is a constant (assumed
to remain fixed as $\Omega$ grows). Choosing $d=\frac{1+2\alpha}{\alpha-1}\nu\vartheta$
also leads to the bound $\e^{\OO(\Omega^5\log(\Omega))}$. 

For the (generically) minimal order $r=\nu$, the approach of this section only
delivers the height bound $\e^{\OO(\Omega^6\log(\Omega))}$ for a telescoper of
degree~$\OO(\Omega^3)$, which is much worse than the height bound
$\e^{\OO(\Omega^3\log(\Omega))}$ obtained in Theorem~\ref{thm:minimal} for a
telescoper of degree at most~$\OO(\Omega^4)$.

To conclude the section, we again compare the theoretical bound with the actual heights
found on a particular example. 

\begin{example}
  For $\Omega=1,2,3,\dots$ consider the same proper hypergeometric term
  $h_\Omega=\frac{\Gamma(\Omega k)}{\Gamma(\Omega n-k)}$ as in Example~\ref{ex:minimal}.
  From the minimal telescopers $L_\Omega$ of order $\nu=\Omega+1$, we constructed
  nonminimal telescopers of order $2\Omega$ of small degree and height.

  For each $L_\Omega$, we computed many terms of a randomly chosen sequence
  solution, and used these to construct a candidate operator $M_\Omega$ of order $2\Omega$
  and minimal degree by guessing. Checking that the $M_\Omega$ are
  left-multiples of the $L_\Omega$ proves that they are indeed
  telescopers. Unlike the minimal order operators $L_\Omega$, the minimal degree
  operators of order $2\Omega$ are typically not unique but form a vector space
  over $\set Q$ of dimension greater than~$1$. For example, for $\Omega=6$, the
  telescopers of order $12$ and degree $53$ form a vector space of dimension~$3$
  and there are no telescopers of order $12$ and degree $52$ or less. Using \edit{lattice 
  reduction~\citep{vzgathen99,nguyen10},}
  we determined an element of these vector spaces with small (but not
  necessarily smallest possible) integer coefficients. Let $H_\Omega$ be the logarithm of the
  maximum of the absolute values of the coefficients of the vector computed in
  this way.

  In Figure~\ref{fig:fitless}, we plot the values of $\frac{H_\Omega}{\Omega^5}$ (bullets, \tikz[baseline=-1ex] \node[fill,circle, inner sep=1pt] {};)
  against the least square fits
  \begin{enumerate}
  \item $\displaystyle\log(\Omega)\Bigl(0.269 + \frac{0.599}\Omega\Bigr)$ (solid, \tikz[baseline=-1ex] \draw[solid] (0,0)--(1,0);)
    \rule{0pt}{2em}
  \item $\displaystyle\frac{\log(\Omega)}{\Omega}\Bigl(2.73 - \frac{3.39}\Omega\Bigr)$ (dashed, \tikz[baseline=-1ex] \draw[dashed] (0,0)--(1,0);)
  \end{enumerate}
  for\rule{0pt}{1.7em} comparing the hypotheses $H_\Omega=\Theta(\Omega^5\log(\Omega))$ or
  $H_\Omega=\Theta(\Omega^4\log(\Omega))$.  Unfortunately, because of the high
  computational cost of computing~$H_\Omega$, we were not able to produce more
  data points. However, despite being less convincing than the test in the
  previous example, also here the solid curve seems to catch the trend better
  than the dashed curve, suggesting that the (quasi\hbox{-})quintic bound can
  probably not be improved to a (quasi\hbox{-})quartic bound in general.
  \edit{It also seems that the resulting bit size estimate $\OO(\Omega^8\log(\Omega))$
    is reasonably tight.}


\begin{figure}[th]
\centering\small
\begin{tikzpicture}[xscale=0.6,yscale=4]
   \draw[->] (0,0) -- (9,0) node[below]{$\Omega$};
   \draw[->] (0,0) -- (0,1) node[left]{$H_\Omega/\Omega^5$};
   \foreach \i in {1,2,...,8}
      \draw[very thin,yscale=0.15] (\i,0.2) -- +(0,-0.4);
   \foreach \i in {2,4,6,8}
      \node[below] at (\i,-0.02) {$\i$};
   \foreach \i in {0.2,0.4,0.6,0.8}
      \draw[xscale=0.8] (0.2,\i) -- +(-0.4,0) node[left]{$\i$};
   \draw[solid] plot coordinates{ 
      (1.000276, 0.000000) (1.006415, 0.005528) (1.043450, 0.035871) 
      (1.083448, 0.065896) (1.124770, 0.094280) (1.165196, 0.119768) 
      (1.204310, 0.142527) (1.247739, 0.165853) (1.286762, 0.185257) 
      (1.328428, 0.204523) (1.366183, 0.220815) (1.407459, 0.237483) 
      (1.446297, 0.252178) (1.486894, 0.266608) (1.526585, 0.279880) 
      (1.568137, 0.292968) (1.608157, 0.304860) (1.649084, 0.316360) 
      (1.689671, 0.327158) (1.726966, 0.336591) (1.769711, 0.346869) 
      (1.807943, 0.355616) (1.848706, 0.364514) (1.887721, 0.372645) 
      (1.931042, 0.381266) (1.968556, 0.388408) (2.011133, 0.396176) 
      (2.049938, 0.402963) (2.092385, 0.410090) (2.129009, 0.416005) 
      (2.170720, 0.422494) (2.211010, 0.428527) (2.251275, 0.434340) 
      (2.291391, 0.439929) (2.329930, 0.445120) (2.371591, 0.450544) 
      (2.411298, 0.455544) (2.453087, 0.460638) (2.490916, 0.465108) 
      (2.532725, 0.469900) (2.572770, 0.474352) (2.612718, 0.478665) 
      (2.654444, 0.483042) (2.692875, 0.486963) (2.732233, 0.490873) 
      (2.775689, 0.495074) (2.815036, 0.498777) (2.855275, 0.502469) 
      (2.896196, 0.506130) (2.933812, 0.509417) (2.973908, 0.512840) 
      (3.013700, 0.516159) (3.056264, 0.519627) (3.093825, 0.522620) 
      (3.137163, 0.525997) (3.176272, 0.528979) (3.214968, 0.531869) 
      (3.256579, 0.534914) (3.298359, 0.537907) (3.336372, 0.540578) 
      (3.376740, 0.543360) (3.416135, 0.546025) (3.458840, 0.548859) 
      (3.495792, 0.551267) (3.538232, 0.553984) (3.578129, 0.556492) 
      (3.617639, 0.558934) (3.657182, 0.561337) (3.697512, 0.563747) 
      (3.740262, 0.566259) (3.779459, 0.568525) (3.818152, 0.570727) 
      (3.859514, 0.573045) (3.900786, 0.575322) (3.937830, 0.577336) 
      (3.981711, 0.579687) (4.018429, 0.581625) (4.061375, 0.583860) 
      (4.102362, 0.585962) (4.138959, 0.587814) (4.180116, 0.589870) 
      (4.221546, 0.591911) (4.262779, 0.593915) (4.301007, 0.595749) 
      (4.340591, 0.597624) (4.381527, 0.599540) (4.422333, 0.601426) 
      (4.464307, 0.603341) (4.501277, 0.605008) (4.542897, 0.606863) 
      (4.584688, 0.608704) (4.624961, 0.610457) (4.661533, 0.612032) 
      (4.705020, 0.613883) (4.741860, 0.615435) (4.784713, 0.617220) 
      (4.822646, 0.618784) (4.864263, 0.620481) (4.903893, 0.622081) 
      (4.945243, 0.623733) (4.983215, 0.625234) (5.024173, 0.626838) 
      (5.066717, 0.628487) (5.103751, 0.629909) (5.143750, 0.631430) 
      (5.185072, 0.632987) (5.225497, 0.634496) (5.264611, 0.635942) 
      (5.308041, 0.637533) (5.347064, 0.638949) (5.388729, 0.640447) 
      (5.426484, 0.641793) (5.467761, 0.643253) (5.506599, 0.644614) 
      (5.547195, 0.646025) (5.586886, 0.647393) (5.628439, 0.648813) 
      (5.668459, 0.650169) (5.709385, 0.651545) (5.749972, 0.652898) 
      (5.787268, 0.654132) (5.830013, 0.655535) (5.868244, 0.656780) 
      (5.909007, 0.658098) (5.948023, 0.659350) (5.991344, 0.660729) 
      (6.028857, 0.661915) (6.071435, 0.663250) (6.110240, 0.664458) 
      (6.152686, 0.665770) (6.189311, 0.666895) (6.231021, 0.668166) 
      (6.271312, 0.669385) (6.311576, 0.670596) (6.351692, 0.671793) 
      (6.390231, 0.672936) (6.431893, 0.674163) (6.471599, 0.675324) 
      (6.513389, 0.676539) (6.551218, 0.677631) (6.593026, 0.678830) 
      (6.633072, 0.679971) (6.673020, 0.681103) (6.714745, 0.682277) 
      (6.753176, 0.683351) (6.792535, 0.684445) (6.835991, 0.685646) 
      (6.875337, 0.686726) (6.915576, 0.687823) (6.956498, 0.688933) 
      (6.994113, 0.689947) (7.034210, 0.691022) (7.074001, 0.692083) 
      (7.116565, 0.693211) (7.154127, 0.694200) (7.197464, 0.695335) 
      (7.236574, 0.696354) (7.275270, 0.697356) (7.316881, 0.698428) 
      (7.358660, 0.699498) (7.396673, 0.700466) (7.437041, 0.701489) 
      (7.476437, 0.702482) (7.519142, 0.703552) (7.556094, 0.704474) 
      (7.598534, 0.705526) (7.638430, 0.706511) (7.677941, 0.707480) 
      (7.717483, 0.708446) (7.757814, 0.709426) (7.800564, 0.710459) 
      (7.839761, 0.711402) (7.878454, 0.712328) (7.919816, 0.713313) 
      (7.961088, 0.714291) (8.000000, 0.715208) }
      node[right]{$\Theta(\Omega^5\log(\Omega))$};
   \draw[dashed] plot coordinates{ 
      (0.709702, 1.000000) (0.724412, 0.864877) (0.762344, 0.608900) 
      (0.803962, 0.401839) (0.843592, 0.258596) (0.884942, 0.151236) 
      (0.922914, 0.081478) (0.963872, 0.029831) (1.006415, -0.004021) 
      (1.043450, -0.020923) (1.083448, -0.029105) (1.124770, -0.029119) 
      (1.165196, -0.022837) (1.204310, -0.012288) (1.247739, 0.003250) 
      (1.286762, 0.019725) (1.328428, 0.039177) (1.366183, 0.057951) 
      (1.407459, 0.079280) (1.446297, 0.099788) (1.486894, 0.121410) 
      (1.526585, 0.142524) (1.568137, 0.164427) (1.608157, 0.185201) 
      (1.649084, 0.206019) (1.689671, 0.226164) (1.726966, 0.244190) 
      (1.769711, 0.264241) (1.807943, 0.281599) (1.848706, 0.299493) 
      (1.887721, 0.316020) (1.931042, 0.333685) (1.968556, 0.348400) 
      (2.011133, 0.364457) (2.049938, 0.378505) (2.092385, 0.393245) 
      (2.129009, 0.405451) (2.170720, 0.418789) (2.211010, 0.431120) 
      (2.251275, 0.442919) (2.291391, 0.454171) (2.329930, 0.464522) 
      (2.371591, 0.475226) (2.411298, 0.484975) (2.453087, 0.494775) 
      (2.490916, 0.503256) (2.532725, 0.512214) (2.572770, 0.520402) 
      (2.612718, 0.528202) (2.654444, 0.535973) (2.692875, 0.542804) 
      (2.732233, 0.549488) (2.775689, 0.556518) (2.815036, 0.562578) 
      (2.855275, 0.568488) (2.896196, 0.574213) (2.933812, 0.579233) 
      (2.973908, 0.584336) (3.013700, 0.589159) (3.056264, 0.594063) 
      (3.093825, 0.598181) (3.137163, 0.602697) (3.176272, 0.606565) 
      (3.214968, 0.610207) (3.256579, 0.613925) (3.298359, 0.617462) 
      (3.336372, 0.620515) (3.376740, 0.623591) (3.416135, 0.626436) 
      (3.458840, 0.629352) (3.495792, 0.631739) (3.538232, 0.634331) 
      (3.578129, 0.636629) (3.617639, 0.638775) (3.657182, 0.640801) 
      (3.697512, 0.642746) (3.740262, 0.644680) (3.779459, 0.646341) 
      (3.818152, 0.647880) (3.859514, 0.649419) (3.900786, 0.650849) 
      (3.937830, 0.652048) (3.981711, 0.653366) (4.018429, 0.654388) 
      (4.061375, 0.655494) (4.102362, 0.656462) (4.138959, 0.657258) 
      (4.180116, 0.658078) (4.221546, 0.658826) (4.262779, 0.659497) 
      (4.301007, 0.660056) (4.340591, 0.660574) (4.381527, 0.661046) 
      (4.422333, 0.661454) (4.464307, 0.661813) (4.501277, 0.662079) 
      (4.542897, 0.662324) (4.584688, 0.662515) (4.624961, 0.662649) 
      (4.661533, 0.662730) (4.705020, 0.662776) (4.741860, 0.662775) 
      (4.784713, 0.662730) (4.822646, 0.662651) (4.864263, 0.662524) 
      (4.903893, 0.662366) (4.945243, 0.662164) (4.983215, 0.661945) 
      (5.024173, 0.661676) (5.066717, 0.661361) (5.103751, 0.661058) 
      (5.143750, 0.660703) (5.185072, 0.660306) (5.225497, 0.659889) 
      (5.264611, 0.659460) (5.308041, 0.658955) (5.347064, 0.658477) 
      (5.388729, 0.657942) (5.426484, 0.657436) (5.467761, 0.656861) 
      (5.506599, 0.656300) (5.547195, 0.655693) (5.586886, 0.655080) 
      (5.628439, 0.654419) (5.668459, 0.653764) (5.709385, 0.653076) 
      (5.749972, 0.652377) (5.787268, 0.651721) (5.830013, 0.650952) 
      (5.868244, 0.650251) (5.909007, 0.649489) (5.948023, 0.648746) 
      (5.991344, 0.647907) (6.028857, 0.647169) (6.071435, 0.646319) 
      (6.110240, 0.645532) (6.152686, 0.644661) (6.189311, 0.643899) 
      (6.231021, 0.643021) (6.271312, 0.642164) (6.311576, 0.641297) 
      (6.351692, 0.640425) (6.390231, 0.639580) (6.431893, 0.638657) 
      (6.471599, 0.637770) (6.513389, 0.636828) (6.551218, 0.635970) 
      (6.593026, 0.635014) (6.633072, 0.634092) (6.673020, 0.633166) 
      (6.714745, 0.632192) (6.753176, 0.631291) (6.792535, 0.630362) 
      (6.835991, 0.629332) (6.875337, 0.628394) (6.915576, 0.627430) 
      (6.956498, 0.626446) (6.994113, 0.625537) (7.034210, 0.624566) 
      (7.074001, 0.623598) (7.116565, 0.622558) (7.154127, 0.621639) 
      (7.197464, 0.620574) (7.236574, 0.619611) (7.275270, 0.618655) 
      (7.316881, 0.617625) (7.358660, 0.616588) (7.396673, 0.615643) 
      (7.437041, 0.614638) (7.476437, 0.613655) (7.519142, 0.612588) 
      (7.556094, 0.611663) (7.598534, 0.610600) (7.638430, 0.609599) 
      (7.677941, 0.608607) (7.717483, 0.607613) (7.757814, 0.606599) 
      (7.800564, 0.605523) (7.839761, 0.604536) (7.878454, 0.603561) 
      (7.919816, 0.602519) (7.961088, 0.601479) (8.000000, 0.600498) }
      node[right]{$\Theta(\Omega^4\log(\Omega))$};
   \foreach \i/\j in {1/0.000, 2/0.491, 3/0.402, 4/0.696, 5/0.392, 6/0.718,
       7/0.826, 8/0.671}
      \node[fill,circle, inner sep=1pt] at (\i,\j) {};
\end{tikzpicture}
\caption{Heights of nonminimal telescopers}
\label{fig:fitless}
\end{figure}
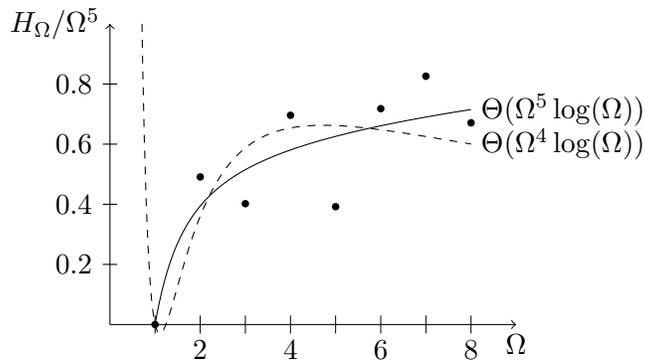
\end{example}

\section{Consequences}

Theorems~\ref{thm:minimal} and~\ref{thm:nonminimal} are primarily interesting for two reasons.
First, they give rise to a significant improvement of Yen's ``two-line
algorithm'' for proving hypergeometric summation identities~\citep{yen93,yen96}, and
second, they imply a bound on the bit complexity of creative telescoping.  No
such bound was known before.

The two-line algorithm rests on the following observations. 

\begin{proposition}[\citealt{yen93,yen96}]
  Let $L\in\set Z[n][S_n]$ be an operator of order~$r$ and degree~$d$,
  and let $\ell_r\in\set Z[n]\setminus\{0\}$ be the coefficient of $S_n^r$ in~$L$.
\begin{enumerate}
\item Suppose there is a sequence $(a_n)_{n=0}^\infty$ which is annihilated by
  $L$ and contains a run of at least $r+d+1$ consecutive $1$'s (i.e., there
  exists an index $n_0\in\set N$ with
  $a_{n_0}=a_{n_0+1}=\cdots=a_{n_0+r+d+1}=1$). Then $L$ also annihilates~$1$.
\item Let $(a_n)_{n=0}^\infty$ and $(b_n)_{n=0}^\infty$ be sequences which are
  annihilated by~$L$. If $a_n=b_n$ for all $n\leq r+n_0$, where $n_0$ is the
  greatest nonnegative integer root of $\ell_r$ (or $n_0=0$ if $\ell_r$ has no nonnegative integer
  roots). Then $a_n=b_n$ for all $n\in\set N$.
\item 
  If $n_0$ is an integer root of~$\ell_r$, then $n_0\leq \edit{|\ell_r|}$. 
\end{enumerate}
\end{proposition}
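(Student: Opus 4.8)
The plan is to reduce the claim to the rational root theorem for polynomials with integer coefficients. First I would dispose of the trivial cases. Since $\ell_r\neq0$, its standard height $|\ell_r|=\max_i|c_i|$ is a positive integer; hence if the root $n_0$ is nonpositive we immediately have $n_0\leq0<|\ell_r|$ and there is nothing to prove. I may therefore assume from now on that $n_0$ is a \emph{positive} integer.

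Next I would extract the trailing coefficient. Writing $\ell_r=\sum_{i=0}^d c_i n^i$ with $c_i\in\set Z$, let $j$ be the smallest index for which $c_j\neq0$; such an index exists because $\ell_r\neq0$. Evaluating at the root gives $\sum_{i=j}^d c_i n_0^i=0$, and dividing through by $n_0^j$ (legitimate since $n_0>0$) yields
\[
  c_j=-n_0\bigl(c_{j+1}+c_{j+2}n_0+\cdots+c_d n_0^{\,d-j-1}\bigr).
\]
The factor in parentheses is an integer, so $n_0$ divides the nonzero integer $c_j$. Consequently $n_0=|n_0|\leq|c_j|$, and since $|c_j|\leq\max_i|c_i|=|\ell_r|$ by the definition of the standard height, the chain $n_0\leq|c_j|\leq|\ell_r|$ closes the argument.

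I expect the only genuinely subtle point to be the possibility of a vanishing constant term. If $\ell_r(0)=0$, the naive form of the rational root theorem, which divides the constant term $c_0$, yields no information, and this is precisely why I would phrase the divisibility in terms of the trailing coefficient $c_j$ rather than $c_0$. An equivalent route would be to factor $\ell_r=n^s g$ with $g(0)\neq0$ and apply the theorem to $g$, observing that $g$ and $\ell_r$ carry the same multiset of coefficients and hence the same height; working directly with $c_j$ merely avoids introducing this auxiliary notation.
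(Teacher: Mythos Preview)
Your argument for part~3 is correct: it is the standard rational-root divisibility argument, with the sensible precaution of working with the trailing nonzero coefficient rather than the constant term so that a factor of $n^j$ in $\ell_r$ causes no trouble. Note, however, that the paper itself does not prove this proposition at all; it is quoted from Yen's thesis and paper \citep{yen93,yen96} and used as a black box, so there is no proof in the present paper to compare against.

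One remark on scope: the proposition has three parts, and your write-up addresses only part~3. Parts~1 and~2 are statements about sequences annihilated by a recurrence operator (uniqueness of solutions past the largest singularity, and the fact that a run of $r+d+1$ consecutive $1$'s forces $L(1)=0$ because a polynomial of degree~$d$ with $d+1$ zeros vanishes identically). These require separate, though still elementary, arguments. If you were asked only for part~3, your proof is complete as it stands; if the full proposition was intended, the other two parts are missing.
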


In view of these facts, in order to prove a hypergeometric summation identity
\[
  \sum_k h(n,k) = 1,
\]
for a given proper hypergeometric term $h(n,k)$ which has finite support and no singularities in
$\set N\times\set Z$, 
\edit{and for which also the term in \eqref{eq:Ch} has no singularity for any $r\in\set N$},
it suffices to proceed as follows:
\begin{enumerate}
\item Determine bounds on the order~$r$, the degree~$d$, and the height~$H$, of
  some telescoper of the summand~$h$.
\item Check the identity for $n=0,\dots,r+ \edit{H}$. It holds for all $n\in\set N$
  iff it holds for all these points.
\end{enumerate}
For step~1, Yen gives an explicit formula for a bound with asymptotic growth
$\e^{\OO(\Omega^6\log(\Omega))}$ ($\Omega\to\infty$). Our bound from
Theorem~\ref{thm:minimal} is significantly better, albeit still
exponential. Although, as illustrated in \edit{Example~\ref{ex:minimal}}, our bound seems to be
tight in general, it turns out that in virtually all examples the integer roots
of the leading coefficient~$\ell_r$ are much smaller than they could be. In
these cases, it remains much more efficient to compute a telescoper for the
summand and inspect the linear factors of~$\ell_r$.

For the cost of computing a telescoper, Theorem~8 of \citet{chen12c} says that a
telescoper of order~$r=\nu$ [~resp. $r=\OO(\Omega)$~] and degree $d=\OO(\Omega^3)$
[~resp. $d=\OO(\Omega^2)$~] can be computed using $\OO^\sim(\Omega^9)$
[~resp. $\OO^\sim(\Omega^8)$~] arithmetic operations, where the soft-O notation
$\OO^\sim(\cdot)$ suppresses possible logarithmic terms. If we use these
algorithms to compute telescopers modulo various primes and then use Chinese
remaindering and rational reconstruction to combine the results of the modular
computations into a telescoper with integer coefficients, this will take time
proportional to the length of the integers appearing in the output times the
number of arithmetic operations spent for a single prime. We thus obtain a bound
$\OO^\sim(\Omega^3)\times\OO^\sim(\Omega^9)=\OO^\sim(\Omega^{12})$ for the time to
compute a telescoper of order $r=\nu$ if no lower order telescoper exists, and a
bound of $\OO^\sim(\Omega^5)\times\OO^\sim(\Omega^8)=\OO^\sim(\Omega^{13})$ for the
time to compute a nonminimal telescoper of order~$r=\OO(\Omega)$.

There is another, somewhat more heuristic algorithm which makes use of the fact
that all the telescopers of a given term~$h$ form a left ideal in the operator
algebra~$\set Q(n)[S_n]$ (see~\citet{bronstein96} for a tutorial on arithmetic in
such algebras). The algorithm proceeds as follows. Choose a prime $p\in\set Z$
and compute several nonminimal telescopers, then take their greatest common
right divisor in $\set Z_p(n)[S_n]$, and hope that this is the modular image of
the minimal telescoper. With high probability, this will be the case. Repeat the
computation for various primes and use Chinese remaindering and rational
reconstruction to recover an operator in $\set Q(n)[S_n]$ from all the modular
greatest common right divisors. If we assume that the cost of computing the
greatest common right divisor can be neglected, then this algorithm spends
$\OO^\sim(\Omega^8)$ operations in $\set Z_p$ for every prime~$p$, and if we
further assume that possible issues related to unlucky primes can be neglected
as well, we expect to need $\OO^\sim(\Omega^3)$ primes of size $\OO^\sim(1)$. The
resulting bit complexity is thus
$\OO^\sim(\Omega^3)\times\OO^\sim(\Omega^8)=\OO^\sim(\Omega^{11})$ for terms $h$
whose minimal telescoper has order~$r=\nu$.

As pointed out above, for proving a hypergeometric identity it is not necessary
to explicitly compute a telescoper for the summand. Yen's algorithm gets away
without computing any information about the telescoper. It is however
very expensive. On the other hand, explicitly computing a complete telescoper is
more than we need, even though it is cheaper. The algorithm proposed
by~\citet{guo08} is an attempt to compromise between these two extremes: it
actually sets up the linear system for computing a telescoper, but then, instead
of solving it, it determines a bound on the height of the solution, taking into
account special features of the particular matrix at hand, such as sparsity, in
a more careful way than it would be easily possible to do in a general
analysis. Unfortunately, Guo et al.\ do not make any statement about the
complexity of their algorithm. It would be interesting to know whether their
improvement can be translated into better bounds on either the height of a
telescoper or, more generally, on the bit complexity of creative telescoping.

\section{Acknowledgement}

We would like to thank Mogens Lemvig Hansen for producing the graphs for both examples.

\bibliographystyle{plain}

\end{document}